\theoremstyle{plain}
\newtheorem{theorem}{Theorem}[section]
\newtheorem{proposition}[theorem]{Proposition}
\newtheorem{corollary}[theorem]{Corollary}
\newtheorem{lemma}[theorem]{Lemma}
\theoremstyle{definition}
\newtheorem{definition}[theorem]{Definition}
\newtheorem{remark}[theorem]{Remark}
\newtheorem{example}[theorem]{Example}
\newcommand\restr[2]{{% we make the whole thing an ordinary symbol
  \left.\kern-\nulldelimiterspace % automatically resize the bar with \right
  #1 % the function
  %\vphantom{\big|} % pretend it's a little taller at normal size
  \right|_{#2} % this is the delimiter
}}
\newcommand{\R}{\mathbb{R}}
\renewcommand{\d}{\mathrm{d}}
\newcommand{\df}{\Omega}
\newcommand{\Cinfty}{\mathscr{C}^\infty}
\newcommand{\T}{{\mathrm T}}
\let\Tan\T
\newcommand{\cT}{\T^{\ast}}
\newcommand{\Id}{\mathrm{Id}}
\newcommand{\Lie}{\mathscr{L}}
\newcommand{\vf}{\mathfrak{X}}
\newcommand{\X}{\mathfrak{X}}
\newcommand{\F}{\mathcal{F}}
\newcommand{\parder}[2]{\frac{\partial #1}{\partial #2}}
\newcommand{\tparder}[2]{\partial #1/\partial #2}
\newcommand{\parderr}[3]{\frac{\partial^2 #1}{\partial #2\partial #3}}
\DeclareMathAlphabet{\mathpzc}{OT1}{pzc}{m}{it}
\def\d{\mathrm{d}}
\newcommand{\fracp}[2]{\frac{\partial #1}{\partial #2}}
\def\X{\mathfrak{X}}
\def\d{\mathrm{d}}
\def\Ker{\mathrm{Ker \,}}
\def\Tan{\mathrm{T}}
\def\Lie{\mathscr{L}}
\def\FL{\mathcal{F} L}
\newcommand{\totder}[2]{\frac{\mathrm{d} #1}{\mathrm{d} #2}}
\begin{document}
%%%%%%%%%%%%%%%%%%%%%%%%%%%%%%%%%%%%%%%%%%%%%%%%%%%%%%%%%%%%%%%%
%%%%%%%%%%%%%%%%%%%%%%%%%%%%%%%%%%%%%%%%%%%%%%%%%%%%%%%%%%%%%%%%
\parskip=3pt

\vspace{5em}

{\huge\sffamily\raggedright
The evolution operator connecting the\\[1ex] 
Lagrangian and Hamiltonian formalisms for contact systems
}

\vspace{1.5em}

{\large\raggedright
    \today
}

\vspace{1.5em}

{\Large\raggedright\sffamily
    Xavier Gràcia
}\vspace{1mm}\newline
{\raggedright
    Department of Mathematics, Universitat Politècnica de Catalunya\\
    Barcelona, Catalonia, Spain\\
    e-mail: 
    \href{mailto:xavier.gracia@upc.edu}{xavier.gracia@upc.edu} --- {\sc orcid}: 
    \href{https://orcid.org/0000-0003-1006-4086}{0000-0003-1006-4086}
}

\bigskip

{\Large\raggedright\sffamily
    Ángel Martínez-Muñoz
}\vspace{1mm}\newline
{\raggedright
    Department of Computer Engineering and Mathematics, Universitat Rovira i Virgili\\
    Avinguda Països Catalans 26, 43007 Tarragona, Spain\\
    e-mail: \href{mailto:angel.martinezm@urv.cat}{angel.martinezm@urv.cat} --- {\sc orcid}: \href{https://orcid.org/0009-0002-8944-5403}{0009-0002-8944-5403}
}

\bigskip

{\Large\raggedright\sffamily
    Xavier Rivas
}\vspace{1mm}\newline
{\raggedright
    Department of Computer Engineering and Mathematics, Universitat Rovira i Virgili\\
    Avinguda Països Catalans 26, 43007 Tarragona, Spain\\
    e-mail: \href{mailto:xavier.rivas@urv.cat}{xavier.rivas@urv.cat} --- {\sc orcid}: \href{https://orcid.org/0000-0002-4175-5157}{0000-0002-4175-5157}
}

\bigskip

{\Large\raggedright\sffamily
    Narciso Román-Roy
}\vspace{1mm}\newline
{\raggedright
    Department of Mathematics, Universitat Politècnica de Catalunya\\
    Barcelona, Catalonia, Spain\\
    e-mail: 
    \href{mailto:narciso.roman@upc.edu}{narciso.roman@upc.edu} --- {\sc orcid}: 
    \href{https://orcid.org/0000-0003-3663-9861}{0000-0003-3663-9861}
}

\vspace{2em}

{\large\bf\raggedright
    Abstract
}\vspace{1mm}\newline
{%\raggedright
\parindent 0pt

Some mechanical systems with dissipation can be described within the framework of the so-called contact mechanics: a modified form of the Euler--Lagrange equations stemming from Herglotz's variational principle, which admits a geometric formulation in terms of contact geometry. On the other hand, the study of singular Lagrangian systems and Dirac's theory of constraints can be enhanced by using the evolution operator~$K$ that connects the Lagrangian and Hamiltonian formalisms.

The main purpose of this paper is to transpose this evolution operator to the case of contact mechanics, and to study some of its main properties. In particular, we show that it provides a geometric description of the evolution equations and it relates the Hamiltonian and Lagrangian constraints. To illustrate the theory, we provide examples of singular contact systems based on modified versions of the simple pendulum and the Cawley Lagrangian.
}
\bigskip

{\large\bf\raggedright
    Keywords:
}
contact structure, Lagrangian mechanics, Hamiltonian mechanics, singular Lagrangian, constraint
\medskip

{\large\bf\raggedright
    MSC2020 codes:
}
% {\sl Primary:
%Differential geometric methods (tensors, connections, symplectic, Poisson, contact, Riemannian, nonholonomic, etc.) for problems in mechanics
70G45;
% Relations of finite-dimensional Hamiltonian and Lagrangian systems with topology, geometry and differential geometry
37J39,
%Contact systems
37J55,
% Lagrange’s equations
70H03,
% Hamilton’s equations
70H05,
%Constrained dynamics, Dirac's theory of constraints
70H45

\bigskip

%%%%%%%%%%%%%%%%%%%%

\newpage

{\setcounter{tocdepth}{2}
\def\baselinestretch{1}
\small
% hack per a eliminar l'espai vertical 1em
\def\addvspace#1{\vskip 1pt}
\parskip 0pt plus 0.1mm
\tableofcontents
}

\pagestyle{fancy}

% Custom header and footer content
\fancyhead[L]{Evolution operator for contact systems}    % Left-aligned header
\fancyhead[C]{}           % Center header (empty)
\fancyhead[R]{X. Gràcia, Á. Martínez-Muñoz, X. Rivas, N. Román-Roy}       % Right-aligned header

\fancyfoot[L]{}     % Left-aligned footer
\fancyfoot[C]{\thepage}                  % Center footer (empty)
\fancyfoot[R]{}            % Right-aligned footer with date

\setlength{\headheight}{17pt}

% Add horizontal line under header
\renewcommand{\headrulewidth}{0.1pt}  % Thickness of the header line (0.4pt)
\renewcommand{\footrulewidth}{0pt}    % No footer line

% Redefine the horizontal line with space above it
\renewcommand{\headrule}{%
    \vspace{3pt}                % Space between header text and the line
    \hrule width\headwidth height 0.4pt % The horizontal line (0.4pt thick)
    \vspace{0pt}                % Additional space below the line (if needed)
}

\setlength{\headsep}{30pt}  % Space between the header and main content

%%%%%%%%%%%%%%%%%%%%%%%%%%%%%%%%%%%%%%%%%%%%%%%%%%%%%%%%%%%%%%%%
%%%%%%%%%%%%%%%%%%%%%%%%%%%%%%%%%%%%%%%%%%%%%%%%%%%%%%%%%%%%%%%%
\section{Introduction}
%%%%%%%%%%%%%%%%%%%%%%%%%%%%%%%%%%%%%%%%%%%%%%%%%%%%%%%%%%%%%%%%
%%%%%%%%%%%%%%%%%%%%%%%%%%%%%%%%%%%%%%%%%%%%%%%%%%%%%%%%%%%%%%%%

In analytical mechanics it is well known that
a regular Lagrangian function leads to uniquely determined dynamics in the velocity space $\T Q$ (Lagrangian formalism)
and in the phase space $\T^*Q$ (Hamiltonian formalism),
which are equivalent through the Legendre transformation.
This is no longer true when the Lagrangian is singular,
which is the case for relativistic invariant theories.
Moreover, a Hamiltonian formulation of mechanics or field theory is highly convenient if one aims for a quantum theory. 
This led P.A.M. Dirac and P.G. Bergmann,
near 1950
\cite{Ber_51,Dir_50},
to start the study of the Hamiltonian formulation of singular Lagrangian theories,
introducing the fundamental notions of constraints and gauge freedom.

Differential equations can be written for singular Lagrangian and Hamiltonian dynamics, 
but there is no guarantee that their solutions are unique,
upon fixation of initial conditions.
What is remarkable, however, 
is that there is still a kind of well-defined evolution,
but at the price of involving two different spaces.
Let us explain this point.

If a time evolution on a manifold $M$ is ruled by a differential equation defined by a vector field~$X$,
the dynamical trajectories obey
$\gamma' = X \circ \gamma$,
in coordinates $\dot x^i = X^i(x(t))$,
then the evolution of any function
(an observable, in physical terminology) 
can be computed based on the chain rule,
and is given by $\mathrm{D}(f \circ \gamma) = (\Lie_X f) \circ \gamma$,
in coordinates
$
\totder{f}{t} = 
\parder{f}{x^i} \dot x^i = \parder{f}{x^i} X^i
$.

Now consider a Lagrangian system given by a Lagrangian function $L \colon \T Q \to \R$.
Using natural coordinates $(q^i,v^i)$ in $\T Q$ and
$(q^i,p_i)$ in $\T^*Q$,
the definition of the Legendre map
$\FL \colon \T Q \to \T^*Q$,
in coordinates
$p_i = \parder{L}{v^i}$,
and the Euler--Lagrange equations,
$
\totder{}{t} \left( \parder{L}{v^i} \right) = \parder{L}{q^i}
$,
give us 
$$
\dot q^i = v^i ,
\qquad 
\dot p_i = \parder{L}{q^i}
\,.
$$
So, using the chain rule, we can compute the time derivative of an arbitrary function $g(q,p)$,
and obtain an expression like
$$
\totder{g}{t} = 
\parder{g}{q^i} v^i +
\parder{g}{p_i} \parder{L}{q^i}
\,.
$$
That is: 
even if we do not know the time-evolution of a function
$g(q,p)$ on the phase space,
we \emph{know it} as expressed in terms of functions 
on the velocity space.
Maybe the reader has noticed that we loosely mixed 
velocity and momentum coordinates
(as, for instance, P.A.M. Dirac did in his 1950 paper,
or K. Kamimura in 1982
\cite{Kam_82}).
In geometric terms, one may work on the direct sum 
$\T^*Q \oplus \T Q$
(sometimes called the Pontryagin bundle),
see, for instance, 
\cite{Ski_83}.
Notwithstanding that,
the right-hand side of the preceding equation can be correctly written using the pullback operator,
and this led to the introduction of the operator~$K$  
in a 1986 paper by Batlle \textit{et al}.\
\cite{BGPR_86}:
$$
K \cdot g = 
\FL^*\left(\parder{g}{q^i}\right) v^i +
\FL^*\left(\parder{g}{p_i}\right) \parder{L}{q^i}
\,.
$$
Hence, $K$ was introduced as a kind of differential operator $K \colon \Cinfty(\T^*Q) \to \Cinfty(\T Q)$.
The same authors used it intensively
to develop the Lagrangian and Hamiltonian constraint algorithms, as well as 
to study the relationship between the different types of constraints in singular theories \cite{BGPR_86,BGPR_87,Pon_88}.

A geometrical description of this differential operator was presented in
\cite{CL_87}
by working on the Pontryagin bundle,
but shortly afterwards it was realized that
$K$ is indeed a 
\textsl{vector field along the Legendre map} 
\cite{GP_89}. In this last paper, two different but equivalent constructions of~$K$ 
were presented,
one that is reminiscent of 
the definition of the symplectic gradient (Hamiltonian vector field) on a symplectic manifold,
and another one involving
Tulczyjew's diffeomorphism
$\T^* \T Q \to \T \T^*Q$
\cite{Tul_1976b}.

In addition to writing equations of motion and relating the Lagrangian and Hamiltonian constraint algorithms,
the operator~$K$ has also been used to study Noether's theorem for singular systems,
as well as the relation between the generators of gauge and rigid symmetries both
in the Lagrangian and Hamiltonian formalisms
\cite{GraPon_1994,GraPon_1988,GP_92a,GraPon_2000,GraPon_2001}.
It has also been used in the study of the Hamilton--Jacobi equation \cite{CGMMMR_2009}, 
and for Lagrangian systems whose Legendre map degenerates on a hypersurface
\cite{PV_00}.

The operator~$K$ can be defined for higher-order Lagrangian systems
\cite{BGPR_1988,CL_92,GPR_91},
where it has the same applications as in the first-order case,
see also
\cite{GraPon_1995,GPR_92}. Some work for the case of field theories can be found in
\cite{EMMR_03,RRS_05}.

In any formulation of the Euler--Lagrange equations, 
either 
with the help of the Lagrangian 2-form,
or with the operator~$K$,
or in coordinates,
it is clear that the non-regularity of the equations 
is due to a singular linear factor
(in coordinates the Hessian matrix 
$\parderr{L}{v^i}{v^j}$)
in front of the highest-order derivative.
This led to the consideration of 
\textsl{linearly singular systems} in
\cite{GP_92},
as a particular class of implicit differential equations,
those that can be written in coordinates as
$A(x) \dot x = b(x)$,
with $A(x)$ a singular matrix.

Lagrangian and Hamiltonian systems are essentially conservative.
Nevertheless, 
certain dissipative systems can be described through
G. Herglotz's variational principle 
\cite{Her_30,Her_85},
similar to Hamilton's variational principle,
but with an additional variable $s$ whose time-derivative is the Lagrangian, so that $s$ is essentially the action:
$$
\frac {\d}{\d t} 
\left( \frac {\partial L}{\partial \dot q^i} \right) - 
\frac {\partial L}{\partial q^i}
=
\frac {\partial L}{\partial s} 
\frac {\partial L}{\partial \dot q^i}
\,,
\qquad 
\dot s = L
\,.
$$
The resulting equations are the so-called Herglotz--Euler--Lagrange equations, and they can be expressed in terms of contact geometry
\cite{BH_16,BGG_17,Gei_08}.
In fact, contact geometry is a very suitable tool to describe these kinds of system,
both in the Hamiltonian \cite{Bra_17,BCT_17,BLMP_20,CG_19,LL_19,LS_17,GG_22} and Lagrangian descriptions \cite{CCM_18,LL_19a,GGMRR_20a,MR_18}. 

The aim of this paper is 
to define and study the properties of the time-evolution operator~$K$
for dissipative systems in the context of contact mechanics.
We show how the main features of this operator
in ordinary mechanics extend to the contact case.
In particular, we show that 
it allows us to write the evolution equations even when the Lagrangian is singular,
and we prove that its application to constraint functions on the Hamiltonian formalism yields constraint functions on the Lagrangian one.
We also provide necessary and sufficient conditions for the Lagrangian constraint functions obtained in this way to be $\FL$-projectable.
To obtain these results, 
we extend some of the theory of almost-regular Lagrangian functions to the framework of contact mechanics.

We also introduce two new, equivalent formulations of the contact Hamiltonian equations, 
namely Equations~\eqref{cH_eq_1} and~\eqref{cH_eq_2}. 
These formulations do not depend on the Reeb vector field 
and are thus particularly well-suited 
for cases in which the Reeb field is not well-defined. 
In particular, 
these results allow us to compute the examples presented in Section~\ref{sec:examples}.

The paper is organized as follows. 
In Section~\ref{Kopsymp}, we briefly review the definition and main properties of the evolution operator in the context of ordinary mechanics.
Section~\ref{contact_mechanics} is devoted to contact dynamics. We introduce the main concepts of contact manifolds, and we define contact Hamiltonian systems and the corresponding contact Hamiltonian equations, 
presenting them in several equivalent forms, one of which notably does not involve the Reeb vector field. 
We provide an account of the contact Lagrangian formalism and its associated contact Hamiltonian formulation.
We also present some results for almost-regular contact Lagrangian functions. 
In Section~\ref{sec:contact_K_operator}, we provide an intrinsic characterization of the evolution operator~$K$ for contact systems in two equivalent ways. 
We also analyse the main properties of the evolution operator.
Finally, Section~\ref{sec:examples} contains two illustrative examples:
the simple pendulum and Cawley's Lagrangian, 
both with an additional dissipation term.

Throughout the paper, all the manifolds and maps are smooth.
Sum over crossed repeated indices is understood.

%%%%%%%%%%%%%%%%%%%%%%%%%%%%%%%%%%%%%%%%%%%%%%%%%%%%%%%%%%%%%%%%
%%%%%%%%%%%%%%%%%%%%%%%%%%%%%%%%%%%%%%%%%%%%%%%%%%%%%%%%%%%%%%%%
\section{The evolution operator of Lagrangian/Hamiltonian mechanics}
\label{Kopsymp}
%%%%%%%%%%%%%%%%%%%%%%%%%%%%%%%%%%%%%%%%%%%%%%%%%%%%%%%%%%%%%%%%
%%%%%%%%%%%%%%%%%%%%%%%%%%%%%%%%%%%%%%%%%%%%%%%%%%%%%%%%%%%%%%%%

The definition of the evolution operator~$K$ 
is based on the concept of vector field along a map.
The usage of vector fields along a path is ubiquitous in differential geometry, 
since the velocity $\gamma'$ of a path~$\gamma$ is a vector field along~$~\gamma$.
Other examples are provided by parallel transport and the Frenet frame in Riemannian geometry.
Less widespread is the notion of vector field along a map,
that is, a lift of a map to the tangent bundle
\cite[section 8.6]{Bou_Var}
\cite[p.\,376]{KMS_93}
or, more generally, that of 
a section along a map
\cite[p.\,36]{Poo_81}
---see
\cite{Car_96}
for a review on their applications. 

In particular, a vector field $Z$ along a map
$F\colon M \to N$
is a map $Z \colon M \to\T N$ such that 
$\tau_N \circ Z = F$,
where $\tau_M \colon \Tan M \to M$ denotes the canonical projection. 
The set of such maps is denoted as $\vf(F)$,
and is a
$\Cinfty(M)$-module. 
Trivial examples of vector fields along~$F$ are 
$Y \circ F$ and $\T F \circ X$,
for vector fields $X \in \vf(M)$ and $Y \in \vf(N)$.
In a similar way we can speak of differential forms along a map.
The inner contraction between 
$Z \in \vf(F)$ and
$\alpha \in \df^k(F)$
can be defined in an obvious way.
See 
\cite{CLM_89}
for more examples of operations with sections along a map.

Let $L\in\Cinfty(\T Q)$ be a Lagrangian function and
$\F L\colon\T Q\to\cT Q$ its
associated Legendre map.
The \textsl{time-evolution operator}~$K$
associated with $L$ is the vector field along the Legendre map, 
$K \colon \T Q \to\T(\cT Q)$, 
satisfying the following conditions:
\begin{enumerate}
\item
(Dynamical condition):
Given the canonical $2$-form $\omega_Q\in\Omega^2(\cT Q)$ and the Lagrangian energy $E_L\in\Cinfty(\T Q)$ defined by $L$, then,
$$ \FL^* (i_K(\omega_Q \circ \FL))= \d E_L $$
\item\label{item_2} 
(Second-order condition): 
If $\pi_Q\colon\cT Q \to Q$ is the canonical projection, then,
$$ \T\pi_Q \circ K = \Id_{\T Q}\,. $$
\end{enumerate}

The existence and uniqueness of this operator is discussed in \cite{GP_89}, 
and it can be shown that it 
can be alternatively defined as 
$K=\chi\circ\d L$, 
where 
$\chi \colon \Tan^*\Tan Q\to \Tan\Tan^*Q$ 
is the canonical isomorphism introduced in 
\cite{Tul_1976b}.
If $(q^i,v^i)$ and $(q^i,p_i)$ are natural local coordinates on $\Tan Q$ and $\Tan^*Q$, respectively,
the local expression of $K$ is
\begin{equation}\label{K.coordinates}
    K(q,v) = v^i\restr{\parder{}{q^i}}{\F L(q,v)}+ \parder{L}{q^i}
    \restr{\parder{}{p_i}}{\F L(q,v)}\,.
\end{equation}

By definition, $\varphi: I\subset\R \to \T Q$ is an integral curve of $K$ if $\T\F L \circ \dot\varphi = K \circ \varphi$. Furthermore, as a consequence of the second-order condition (in the item \ref{item_2}), we have that $\varphi=\dot\phi$, for $\phi: \R \to Q$, that is, $\varphi$ is a holonomic curve. We have the following commutative diagram

\begin{center}
    \begin{tikzcd}[column sep=2cm, row sep=1.75cm]
        & \T(\T Q) \arrow[r, "\T\F L"] \arrow[d, "\tau_{\T Q}"] & \T(\cT Q) \arrow[d, "\tau_{\cT Q}"] \\
        \R \arrow[ur, "\dot\varphi"] \arrow[r, "\varphi"] \arrow[dr, "\phi"] & \T Q \arrow[r, "\F L"] \arrow[d, "\tau_Q"] \arrow[ur, "K"] & \cT Q \\
        & Q
    \end{tikzcd}
\end{center}

\noindent In coordinates, the integral curves of $K$ are the solutions to the equations, 
\begin{equation}
\dot{q} = v\,,\qquad \dot{p} = \parder{L}{q}\,,\qquad \text{and} \qquad p = \parder{L}{v}\,,
\label{K.integral.curves}
\end{equation}
which are obviously equivalent to the Euler--Lagrange equations for $L$.

The Lagrangian and Hamiltonian descriptions of a dynamical Lagrangian system can be unified by means of the operator $K$, as set out in the following properties
\cite{BGPR_86,BGPR_87,CL_87,GP_89}:
\begin{itemize}
\item 
Let $X_L$ be a holonomic vector field on $\T Q$ (that is, a {\sl second-order differential equation} or {\sc SODE}) which is a solution to the Lagrangian equation $i_{X_L}\omega_L=\d E_L$, where $\omega_L\in\Omega^2(\T Q)$ is the Lagrangian $2$-form associated with $L$.
Then $\varphi:\R\to\T Q$ is an
integral curve of $X_L$ if, and only if, it is an integral curve of $K$.

As a direct consequence of this fact, the relation between $K$ and $X_L$ is
\begin{equation}\label{K2a}
\T\F L \circ X_L = K \,.
\end{equation}
\item
Let $X_H$ be a vector field on $\F L(\T Q)\subseteq \cT Q$ which is a solution to the Hamilton equations in the Hamiltonian formalism associated with the Lagrangian system $(\T Q,L)$.  
Then the path $\psi:\R \to \cT Q$ is an integral curve of $X_H$ if, and only if,
\begin{equation}\label{K3}
\dot\psi = K\circ\T\pi_Q\circ\dot\psi \,,
\end{equation}
and, as a consequence, in the final constraint submanifold $S_f$ the relation between $K$ and $X_H$ is
\begin{equation}\label{K4}
    X_H \circ \F L = K \,.
\end{equation}
\end{itemize}
In this way, the equivalence between the
sets of solutions of Euler--Lagrange equations and Hamilton equations is established via the evolution operator.
\begin{itemize}
\item 
The complete classification of Lagrangian and Hamiltonian constraints
appearing in the constraint algorithms for singular dynamical systems
is also achieved using the operator~$K$. 
In fact, all Lagrangian constraints can be obtained from the Hamiltonian
ones using the time-evolution operator since,
if $\xi\in\Cinfty(\cT Q)$ is a Hamiltonian
constraint, then $\Lie_K\xi = i_{K}\d\xi $ is a Lagrangian constraint, and all Lagrangian constraints are recovered in this way.
Additionally, if $\xi$ is a first-class constraint (resp.\ a second-class constraint), then $\Lie_K\xi$ is a dynamical constraint (resp.\ a SODE constraint).
\end{itemize}

%%%%%%%%%%%%%%%%%%%%%%%%%%%%%%%%%%%%%%%%%%%%%%%%%%%%%%%%%%%%%%%%
%%%%%%%%%%%%%%%%%%%%%%%%%%%%%%%%%%%%%%%%%%%%%%%%%%%%%%%%%%%%%%%%
\section{Contact dynamics}\label{contact_mechanics}
%%%%%%%%%%%%%%%%%%%%%%%%%%%%%%%%%%%%%%%%%%%%%%%%%%%%%%%%%%%%%%%%
%%%%%%%%%%%%%%%%%%%%%%%%%%%%%%%%%%%%%%%%%%%%%%%%%%%%%%%%%%%%%%%%

%%%%%%%%%%%%%%%%%%%%%%%%%%%%%%%%%%%%%%%%%%%%%%%%%%%%%%%%%%%%%%%%
\subsection{Hamiltonian systems}
%%%%%%%%%%%%%%%%%%%%%%%%%%%%%%%%%%%%%%%%%%%%%%%%%%%%%%%%%%%%%%%%

A \textsl{contact manifold} is a pair $(M,C)$ such that $M$ is a $(2n+1)$-dimensional manifold and $C$ is a corank-one maximally non-integrable distribution on $M$. We call $C$ a \textsl{contact distribution} on~$M$. Note that $C$ can be locally described, on an open neighbourhood $U$ of each point $x\in M$, as the kernel of a 1-form $\eta\in\Omega^1(U)$ such that $\eta\wedge(\d\eta)^n$ is a volume form on $U$.

A {\sl co-orientable contact manifold} is a pair $(M,\eta)$, where $\eta$ is a 1-form on $M$ such that $(M,\Ker\eta)$ is a contact manifold. Then $\eta$ is called a {\sl  contact form}. Since we are interested in local properties of contact manifolds and related structures, we will hereafter restrict ourselves to co-oriented contact manifolds. To simplify terminology, co-oriented contact manifolds will be called contact manifolds as in the standard modern literature on contact geometry \cite{Bra_17,LL_19,GGMRR_20a}. Thus, $(M,\eta)$ is a contact manifold if $\eta\wedge(\d\eta)^n$ is a volume form.

A contact 1-form $\eta\in\Omega^1(M)$ induces a decomposition of the tangent bundle $\T M$:
$$ \T M = \Ker\eta\oplus\Ker\d\eta\,. $$
Note that if $\eta$ is a contact form on $M$, then $f\eta$ is also a contact form on $M$ for every nowhere-vanishing function $f\in\Cinfty(M)$, and $\Ker \eta = \Ker f\eta$. 

Let $(M,\eta)$ be a contact manifold. There exists a vector bundle isomorphism $B \colon \T M\to\cT M$ given by
\begin{equation}
\label{canis}
 B(v) = i_v(\d\eta)_x + (i_v\eta_x)\eta_x\,,\qquad \forall v\in \T_x M,\quad\forall x\in M\,.     
\end{equation}
This isomorphism can be extended to a $\Cinfty(M)$-module isomorphism $B \colon \X(M)\to\Omega^1(M)$ in the natural way.

Given a contact manifold $(M,\eta)$, there exists a unique vector field $R\in\X(M)$, called the {\sl  Reeb vector field}, such that $i_{R}\d\eta = 0$ and $i_{R}\eta = 1$, or equivalently, $R = B^{-1}(\eta)$. We have that $\Lie_R\eta = 0$ and, therefore, $\Lie_R\d\eta = 0$.

\begin{theorem}[Darboux theorem \cite{AM_78,LM_87}]
    Given a contact manifold $(M,\eta)$ with $\dim M=2n+1$, around every point $x\in M$ there exist local coordinates $(q^i, p_i, s)$, with $i = 1,\dotsc,n$, called {\sl  Darboux coordinates}, such that
    $$ \eta = \d s - p_i\d q^i\,. $$
    In these coordinates, $R = \tparder{}{s}$.
\end{theorem}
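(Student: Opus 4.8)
The plan is to prove the contact Darboux theorem by an inductive construction of coordinates adapted to the splitting $\T M = \Ker\eta\oplus\Ker\d\eta$. The essential idea is that $\d\eta$ restricts to a nondegenerate $2$-form on the even-dimensional distribution $\Ker\eta$ (since $\eta\wedge(\d\eta)^n$ is a volume form, the rank of $\d\eta$ is exactly $2n$ and its kernel is the $1$-dimensional span of the Reeb field~$R$, transverse to $\Ker\eta$), so one can produce Darboux-type symplectic coordinates along $\Ker\eta$ and then account for the extra transverse direction with the single coordinate~$s$.

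First I would fix a point $x\in M$ and choose a function $s$ near $x$ with $\d s_x \neq 0$ and, more specifically, with $R(s) = i_R\,\d s = 1$ at $x$; since $i_R\eta = 1$ this can be arranged so that $\eta$ and $\d s$ agree on $R$. Then $\d s - \eta$ is a $1$-form annihilating $R$, hence lies in the ``horizontal'' directions, and the goal becomes to exhibit it as $p_i\,\d q^i$ for suitable functions $(q^i,p_i)$. To this end I would invoke the classical symplectic Darboux theorem: restricting attention to a hypersurface $\{s=\text{const}\}$ transverse to $R$ (a local slice of the foliation by $R$), the form $\d\eta$ pulls back to a symplectic form, and Darboux's theorem for symplectic manifolds yields coordinates $(q^i,p_i)$ on the slice with $\d\eta = \d p_i\wedge\d q^i$. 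Extending these functions to a neighbourhood so that they are constant along the flow of~$R$ (i.e.\ $R(q^i)=R(p_i)=0$, using $\Lie_R\d\eta = 0$ and $\Lie_R\eta=0$ to guarantee compatibility), one gets $\d\eta = \d p_i\wedge\d q^i = \d(p_i\,\d q^i)$ on the whole neighbourhood.

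The remaining step is to reconcile this with~$\eta$ itself rather than just with $\d\eta$. I would consider the $1$-form $\alpha = \eta - \d s + p_i\,\d q^i$ and show $\alpha = 0$. By construction $\d\alpha = \d\eta - \d(p_i\,\d q^i) = 0$, so $\alpha$ is closed; moreover $\alpha$ annihilates $R$ (since $i_R\eta=1$, $i_R\,\d s=1$, and $i_R(p_i\,\d q^i)=p_i\,R(q^i)=0$) and annihilates the horizontal coordinate fields $\partial/\partial q^i$, $\partial/\partial p_i$ because both $\eta$ and $p_i\,\d q^i$ give the same pairing with them while $\d s$ contributes nothing to those directions. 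Since the fields $R$, $\partial/\partial q^i$, $\partial/\partial p_i$ span $\T M$, this forces $\alpha=0$, i.e.\ $\eta = \d s - p_i\,\d q^i$. Finally, reading off the Reeb condition $i_R\eta = 1$, $i_R\,\d\eta = 0$ in these coordinates immediately gives $R = \partial/\partial s$.

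The main obstacle I anticipate is the careful propagation of the symplectic Darboux coordinates off the transverse slice along the Reeb flow: one must choose the functions $q^i,p_i,s$ so that they are simultaneously Darboux coordinates for $\d\eta$ on each slice \emph{and} invariant under $R$, and then verify that this invariance is exactly what makes the leftover $1$-form $\alpha$ vanish rather than merely be closed. This is where the structural facts $\Lie_R\eta=0$ and $\Lie_R\,\d\eta=0$ (established just before the statement) do the real work; handling the coordinate bookkeeping cleanly, rather than the conceptual content, is the delicate part. Alternatively, one could bypass the slice argument entirely by appealing to the general relative Darboux / Moser-type normal form for $1$-forms of maximal class, but the inductive slice construction is the most self-contained route given the tools available in the excerpt.
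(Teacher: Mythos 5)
The paper itself offers no proof of this theorem: it is quoted as a classical result from the cited references, so the only question is whether your argument stands on its own. It does not quite: there is a genuine gap in the last step. The symplectic Darboux theorem applied on a slice $\{s=\mathrm{const}\}$ normalizes only the two-form $\d\eta$, never its primitive. After extending $(q^i,p_i)$ to be $R$-invariant, what you actually know about $\alpha=\eta-\d s+p_i\,\d q^i$ is that it is closed, that $i_R\alpha=0$, and that $\Lie_R\alpha=0$; this does \emph{not} force $\alpha=0$. Your justification that $\alpha$ annihilates $\partial/\partial q^i$ and $\partial/\partial p_i$ --- that ``both $\eta$ and $p_i\,\d q^i$ give the same pairing with them'' --- is precisely the statement to be proved, and it is false in general. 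Concrete counterexample: on $\R^3$ with $\eta=\d s-p\,\d q$, the functions $Q=q+p$, $P=p$ are $R$-invariant and are Darboux coordinates for $\d\eta$ on the slice, yet $\eta-\d s+P\,\d Q=p\,\d p=\d\bigl(p^2/2\bigr)\neq 0$.

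The gap is fixable, and the fix is the missing idea: since $\alpha$ is closed, locally $\alpha=\d f$, and since $R(f)=i_R\alpha=0$, the function $f$ depends only on $(q^i,p_i)$. Hence $\eta=\d(s+f)-p_i\,\d q^i$, and replacing $s$ by $\tilde s=s+f$ (a legitimate coordinate change, with unit-triangular Jacobian, which still gives $R=\partial/\partial\tilde s$ because $f$ is independent of $s$) yields the normal form. In other words, what the slice argument requires is not the bare symplectic Darboux theorem but its refinement for exact symplectic forms (``Darboux with a prescribed primitive''), or equivalently this absorption of the closed, $R$-basic remainder into the $s$-coordinate; the Moser-type alternative you mention in passing would also avoid the problem, but it is not the route you developed. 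Two smaller points: (i) your signs are inconsistent --- with the target $\eta=\d s-p_i\,\d q^i$ one must have $\d\eta=\d q^i\wedge\d p_i$, and $\d\alpha=\d\eta+\d(p_i\,\d q^i)$, not $\d\eta-\d(p_i\,\d q^i)$, so under your stated convention $\d\eta=\d p_i\wedge\d q^i$ the form $\alpha$ would not even be closed; (ii) you should define $s$ as the flow time of $R$ from a hypersurface transverse to $R$, so that $R(s)\equiv 1$ on a whole neighbourhood rather than only at $x$; otherwise the final identification $R=\partial/\partial s$ fails away from $x$.
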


\begin{example}[Canonical contact manifold]\label{ex:2.2}
Consider the product manifold $M = \cT Q\times\R$, where $Q$ is an $n$-dimensional manifold. Given a local chart $(q^i)$ on $Q$, there exist canonical coordinates $(q^i,p_i)$ in the cotangent bundle $\cT Q$. In addition, $\R$ has a natural coordinate $s$. This gives rise to a coordinate system $(q^i, p_i, s)$ on $\cT Q\times \R$. Then $\eta_Q = \d s - \theta_Q$, where $\theta_Q$ is the pull-back of the Liouville 1-form $\theta\in\Omega^1(\cT Q)$ relative to the canonical projection $\cT Q\times \R\to\cT Q$, is a contact form on $M$. In canonical coordinates, 
$$ \eta_Q = \d s - p_i\d q^i\,,\qquad R=\parder{}{s}\,.$$
    Note that the coordinates $(q^i,p_i,s)$ are Darboux coordinates on $M$. 
\end{example}

A {\sl contact Hamiltonian system} \cite{BCT_17,LL_19,GGMRR_20a} is a triple $(M,\eta,h)$, where $(M,\eta)$ is a contact manifold and $H\in\Cinfty(M)$ is called a \textsl{Hamiltonian function}. Given a contact Hamiltonian system $(M,\eta,H)$, there exists a unique vector field $X_H\in\X(M)$, called the {\sl  contact Hamiltonian vector field} of $H$, satisfying any of the following equivalent conditions:
\begin{enumerate}[(1)]
    \item $i_{X_H}\d\eta = \d H - (\Lie_R H)\eta$ \quad and \quad $i_{X_H}\eta = -H$,
    \item $\Lie_{X_H}\eta = -(\Lie_R H)\eta$ \quad and \quad $ i_{X_H}\eta = -H$,
    \item $B(X_H) = \d H - (\Lie_R H + H)\eta$. 
\end{enumerate}
A vector field $X\in\X(M)$ is said to be \textsl{Hamiltonian} relative to the contact form $\eta$ if it is the Hamiltonian vector field of a function $H\in\Cinfty(M)$. Unlike in the case of symplectic mechanics, a Hamiltonian function $H$ may not be preserved along the integral curves of its associated contact Hamiltonian vector field $X_H$. More precisely,
$$ \Lie_{X_H}H = -(\Lie_R H)H\,. $$
A function $f\in\Cinfty(M)$ such that $\Lie_{X_H}f = -(\Lie_Rf)f$ is called a \textsl{dissipated quantity} \cite{GGMRR_20a}.
In Darboux coordinates, the contact Hamiltonian vector field $X_H$ reads
\begin{equation}\label{Eq:HamCoor} X_H = \parder{H}{p_i}\parder{}{q^i} - \left( \parder{H}{q^i} + p_i\parder{H}{s} \right)\parder{}{p_i} + \left( p_i\parder{H}{p_i} - H \right)\parder{}{s}\,. \end{equation}
Its integral curves, $\gamma(t) = (q^i(t), p_i(t), s(t))$, satisfy the system of differential equations
$$
    \frac{\d q^i}{\d t} = \parder{H}{p_i}\,,\quad
    \frac{\d p_i}{\d t} = - \left( \parder{H}{q^i} + p_i\parder{H}{s} \right)\,,\quad
    \frac{\d s}{\d t} = p_j\parder{H}{p_j} - H\,,\qquad i = 1,\dotsc,n\,.
$$

\begin{proposition}
    Given a contact Hamiltonian system~$(M,\eta,H)$ and a vector field $X\in\X(M)$, the following statements are equivalent:
    \begin{enumerate}[(1)]
        \item The vector field $X$ is the contact Hamiltonian vector field of $H$.
		\item The vector field $X$ satisfies
		\begin{equation} \label{cH_eq_1}
				(i_{X} \d \eta) \wedge \eta = (\d H) \wedge \eta 
				\qquad \text{and} \qquad
				i_{X} \eta = -H \,.
		\end{equation}
		
		\item The vector field $X$ satisfies
		\begin{equation}\label{cH_eq_2}
				i_{X} (\eta \wedge \d \eta) = \Omega\,,
		\end{equation}
		where $\Omega$ is the 2-form on $M$ defined as $\Omega \coloneq - H \d \eta + \d H \wedge \eta$.
		\end{enumerate}
\end{proposition}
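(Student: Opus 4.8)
The plan is to prove the three statements equivalent through the cycle $(1)\Rightarrow(2)\Rightarrow(3)\Rightarrow(1)$, freely using the three equivalent characterizations of the contact Hamiltonian vector field recalled just before the proposition; concretely, statement (1) amounts to $i_X\d\eta=\d H-(\Lie_R H)\eta$ together with $i_X\eta=-H$. The implication $(1)\Rightarrow(2)$ is immediate: wedging the identity $i_X\d\eta=\d H-(\Lie_R H)\eta$ with $\eta$ and using $\eta\wedge\eta=0$ annihilates the Reeb term, leaving $(i_X\d\eta)\wedge\eta=(\d H)\wedge\eta$, while the condition $i_X\eta=-H$ is common to both formulations.

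For $(2)\Rightarrow(3)$ I would expand the left-hand side of \eqref{cH_eq_2} by the antiderivation rule for the interior product. Since $\eta$ is a $1$-form and $i_X\eta$ a function, $i_X(\eta\wedge\d\eta)=(i_X\eta)\,\d\eta-\eta\wedge(i_X\d\eta)=(i_X\eta)\,\d\eta+(i_X\d\eta)\wedge\eta$. Substituting the two relations of \eqref{cH_eq_1}, $i_X\eta=-H$ and $(i_X\d\eta)\wedge\eta=(\d H)\wedge\eta$, turns this into $-H\,\d\eta+\d H\wedge\eta=\Omega$, which is exactly \eqref{cH_eq_2}.

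The delicate step, and the one I expect to be the main obstacle, is $(3)\Rightarrow(1)$, because a single $2$-form equation must be disentangled into the scalar condition $i_X\eta=-H$ and the condition on $i_X\d\eta$. Using the same expansion, \eqref{cH_eq_2} reads $\bigl(i_X\eta+H\bigr)\d\eta+\beta\wedge\eta=0$, where $\beta\coloneq i_X\d\eta-\d H$. The key idea is to contract with the Reeb field $R$: since $i_R\d\eta=0$ the first summand drops out, and as $i_R\eta=1$ one computes $i_R(\beta\wedge\eta)=(i_R\beta)\,\eta-\beta$, so the contracted equation gives $\beta=(i_R\beta)\,\eta$. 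Evaluating $i_R\beta=\d\eta(X,R)-\Lie_R H=-\Lie_R H$ (again using $i_R\d\eta=0$) yields $\beta=-(\Lie_R H)\eta$, which is precisely the first defining relation of $X_H$ and in particular satisfies $\beta\wedge\eta=0$. Reinserting this into the displayed identity leaves $\bigl(i_X\eta+H\bigr)\d\eta=0$; since $\d\eta$ is nondegenerate on $\Ker\eta$, and hence nowhere vanishing for $n\ge1$, it follows that $i_X\eta=-H$. Together these two relations are condition (1), closing the cycle.
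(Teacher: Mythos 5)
Your proof is correct and takes essentially the same approach as the paper's: the decisive steps---expanding $i_X(\eta\wedge\d\eta)$, contracting with the Reeb vector field to recover $i_X\d\eta=\d H-(\Lie_R H)\eta$, and invoking the nondegeneracy of the contact structure to recover $i_X\eta=-H$---are exactly the paper's. The only cosmetic difference is in how the scalar condition is extracted from \eqref{cH_eq_2}: the paper wedges that equation with $\eta$ and uses $\eta\wedge\d\eta\neq 0$, whereas you first determine $\beta = i_X\d\eta - \d H = -(\Lie_R H)\eta$ by the Reeb contraction, reinsert it, and use that $\d\eta$ is nowhere vanishing; both rest on the same contact condition.
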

\begin{proof}
    It is clear that~\eqref{cH_eq_1} implies \eqref{cH_eq_2}. The converse follows from the fact that if we take the wedge by $\eta$ on both sides of~\eqref{cH_eq_2}, we obtain
    $$(H+i_X \eta ) \eta \wedge \d \eta =0\,,$$
    which implies that $i_X \eta = -H$, because $\eta \wedge \d \eta \neq 0$ as a consequence of $\eta$ being a contact form. 
    
    In order to prove that~\eqref{cH_eq_2} implies any of the other equivalent forms of the contact Hamiltonian stated before, we just need to contract \eqref{cH_eq_2} with the Reeb vector field. We obtain
    $$i_R i_X (\eta \wedge \d \eta) = -i_X \d\eta=i_R \Omega = (\Lie_R H)\eta - \d H \,.$$
    The converse comes directly from wedging with the 1-form $\eta$ on both sides of the equation $i_{X_H}\d\eta = \d H - (\Lie_R H)\eta$. 
\end{proof}
\begin{remark}
    The reason why these two last equivalent conditions have been explicitly separated from the rest is because they do not use the Reeb vector field. This is particularly important when dealing with \textsl{precontact systems}, where Reeb vector fields may not exist, or be unique.

    In~\cite{LGGMR_23,LL_19a,Lai_22}, singular contact Lagrangian functions and precontact systems were studied, and a constraint algorithm for these types of systems was developed. However, those works consider only the case in which the precontact structure admits a Reeb vector field—an assumption that is not always satisfied. Later, a more general definition of precontact manifold was introduced in~\cite{GG_23}.
\end{remark}

%%%%%%%%%%%%%%%%%%%%%%%%%%%%%%%%%%%%%%%%%%%%%%%%%%%%%%%%%%%%%%%%
\subsection{Lagrangian systems}
%%%%%%%%%%%%%%%%%%%%%%%%%%%%%%%%%%%%%%%%%%%%%%%%%%%%%%%%%%%%%%%%

Let $Q$ be an $n$-dimensional manifold and consider the product manifold 
$\T Q\times\R$ equipped with adapted coordinates $(q^i,v^i, s)$ and the canonical projections,
$$ s\colon \T Q\times\R\to\R \,, \qquad \tau_1\colon \T Q\times\R\to\T Q\,, \qquad \tau_0\colon \T Q\times\R\to Q\times\R\,. $$
Note that the projections $\tau_1$ and $\tau_0$ are the projection maps of two vector bundle structures. We will mostly use the latter. In fact, the bundle given by $\tau_0$ is the pull-back of the tangent bundle $\T Q\to Q$ with respect to the map $Q \times \R \to Q$.

The next step is to extend the usual geometric structures of the tangent bundle, namely the tangent structure and the Liouville vector field, to the product $\T Q\times\R$. Note that we can write
$$ \T(\T Q \times \R) = (\T\T Q \times \R) \oplus_{Q\times\R} (\T Q \times \T\R)\,, $$
so any operation acting on vectors tangent to $\T Q$ also acts on vectors tangent to $\T Q \times \R$. In particular, the vertical endomorphism of $\T\T Q$ yields a \textsl{vertical endomorphism} on $\Tan(\Tan Q \times \R)$, ${\cal J} \colon \T (\T Q\times\R) \to \T (\T Q\times\R)$. Analogously, the Liouville vector field $\Delta\in\X(\T Q)$ yields a \textsl{Liouville vector field} $\Delta\in\vf(\T Q\times\R)$. In fact, this is the Liouville vector field of the vector bundle structure given by $\tau_0$. In adapted coordinates, these objects read
$$ {\cal J} = \frac{\partial}{\partial v^i} \otimes \d q^i \,,\qquad \Delta = v^i\, \frac{\partial}{\partial v^i}\,. $$
Given a path ${\bf c} \colon\R \rightarrow Q\times\R$ with ${\bf c} = (\mathbf{c}_1,\mathbf{c}_2)$, its \textsl{prolongation} to $\T Q\times\R$ is the path
$$ {\bf \widetilde c} = (\mathbf{c}_1',\mathbf{c}_2) \colon \R \longrightarrow \T Q \times \R \,, $$
where $\mathbf{c}_1'$ is the velocity of~$\mathbf{c}_1$. The path ${\bf \widetilde c}$ is said to be \textsl{holonomic}. A vector field $\Gamma \in \X(\T Q \times \R)$ is said to satisfy the \textsl{second-order condition} (for short: it is a {\sc sode}) when its integral curves are holonomic, or equivalently, if ${\cal J} \circ \Gamma = \Delta$. In adapted coordinates, if ${\bf c}(t)=(c^i(t), s(t))$, then
$$ {\bf \widetilde c}(t) = \Big( c^i(t),\frac{\d c^i}{\d t}(t), s(t) \Big) \,. $$
The local expression of a {\sc sode} is
\begin{equation}
\label{localsode2}
\Gamma= 
v^i \frac{\partial}{\partial q^i} +
f^i \frac{\partial}{\partial v^i} + 
g\,\frac{\partial}{\partial s}
\,.
\end{equation}

\begin{definition}
\label{dfn:lagrangian-function}
A {\sl Lagrangian function} 
is a function $L \colon\T Q\times\R\to\R$. The {\sl Lagrangian energy}
associated with~$L$ is the function $E_L := \Delta(L)-L\in\Cinfty(\T Q\times\R)$. The {\sl Cartan forms}
associated with $L$ are defined as
\begin{equation}
\label{eq:thetaL}
\theta_L = 
{}^t{\cal J} \circ \d L 
\in \Omega^1(\T Q\times\R) 
\,,\quad
\omega_L = 
-\d \theta_L
\in \Omega^2(\T Q\times\R) 
\,.
\end{equation}
The {\sl Lagrangian 1-form} is
$$
\eta_L=\d s-\theta_L\in\Omega^1(\T Q\times\R)
\,,
$$
and satisfies that $\d\eta_L=\omega_L$.
\\
The pair $(\T Q\times\R,L)$ is a {\sl Lagrangian system}.
\end{definition}

If we take natural coordinates $(q^i, v^i, s)$ in $\T Q\times\R$, 
the Lagrangian 1-form $\eta_L$ is written as
\begin{equation}
\label{eq:etaL}
\eta_L = \d s - \frac{\partial L}{\partial v^i} \,\d q^i \,,
\end{equation}
and, consequently,
\begin{equation*}
\d\eta_L = 
-\frac{\partial^2L}{\partial s\partial v^i}\d s\wedge\d q^i 
-\frac{\partial^2L}{\partial q^j\partial v^i}\d q^j\wedge\d q^i 
-\frac{\partial^2L}{\partial v^j\partial v^i}\d v^j\wedge\d q^i\,.
\end{equation*}

The next structure to be defined is the Legendre map. This map is studied in more detail in Section~\ref{sec:almost_regular}.

\begin{definition}
Given a Lagrangian 
$L \colon \T Q\times\R \to \R$, 
its {\sl Legendre map}
is the fibre derivative of~$L$,
considered as a function on the vector bundle
$\tau_0 \colon \T Q\times\R \to Q \times \R$;
that is, the map
$\F L \colon \T Q \times\R \to \cT Q \times \R$ 
given by
$$
    \F L (v_q,s) = \left( \F L(\cdot,s) (v_q),s \right)\,,
$$
where $L(\cdot,s)$ is the Lagrangian with $s$ frozen.
\end{definition}

\begin{proposition}\label{Prop-regLag}
For a Lagrangian function, $L$ the following conditions are equivalent:
\begin{enumerate}[(1)]
\item
The Legendre map
$\F L$ is a local diffeomorphism.
\item The fibre Hessian $\F^2L \colon
\T Q\times\R \longrightarrow (\cT Q\times\R)\otimes_{Q\times\R} (\cT Q\times\R)$ of $L$ is everywhere non-degenerate.
\item The pair $(\T Q\times\R,\eta_L)$ is a contact manifold.
\end{enumerate}
\end{proposition}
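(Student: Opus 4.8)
The plan is to prove the cycle of implications by establishing (1) $\iff$ (2) with the inverse function theorem and (1) $\iff$ (3) through a pullback identity, so that all three conditions become equivalent. The guiding observation is that $\eta_L$ is the pullback by the Legendre map of the canonical contact form $\eta_Q$ on $\cT Q\times\R$ from Example~\ref{ex:2.2}; this reduces the contact condition (3) to a statement about the rank of $\F L$, and sidesteps the combinatorics of expanding $(\d\eta_L)^n$.

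For (1) $\iff$ (2), I would work in the adapted coordinates $(q^i,v^i,s)$ and $(q^i,p_i,s)$. Since $\F L$ reads $(q^i,v^i,s)\mapsto(q^i,\partial L/\partial v^i,s)$, its differential $\T_x\F L$ is represented by a block-triangular Jacobian whose diagonal blocks are the identity on the $q^i$ and $s$ directions and the fibre Hessian matrix $(\partial^2 L/\partial v^i\partial v^j)$ on the $v^i$ directions. Hence $\det\T_x\F L=\det(\partial^2 L/\partial v^i\partial v^j)(x)$, and by the inverse function theorem $\F L$ is a local diffeomorphism near $x$ exactly when this Hessian is non-degenerate at $x$. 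Intrinsically this block is precisely the fibre Hessian $\F^2 L$, since $\F L$ preserves the fibration $\tau_0$ and $\F^2 L$ is its fibre derivative, which gives (1) $\iff$ (2).

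For (1) $\iff$ (3), the key step is the identity $\eta_L=\F L^*\eta_Q$, which is immediate from~\eqref{eq:etaL} and the coordinate form of $\F L$ (both sides equal $\d s-(\partial L/\partial v^i)\,\d q^i$), using that $\F L$ fixes the $s$ coordinate; this is the contact analogue of the classical relation $\theta_L=\F L^*\theta_Q$ between the Cartan and Liouville forms. Taking exterior derivatives and $n$-th wedge powers, and using that pullback commutes with $\d$ and $\wedge$, I obtain
\begin{equation*}
\eta_L\wedge(\d\eta_L)^n=\F L^*\big(\eta_Q\wedge(\d\eta_Q)^n\big)\,.
\end{equation*}
By Example~\ref{ex:2.2}, $\eta_Q$ is a contact form, so $\eta_Q\wedge(\d\eta_Q)^n$ is a volume form on the $(2n+1)$-dimensional manifold $\cT Q\times\R$. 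As $\T Q\times\R$ and $\cT Q\times\R$ have the same dimension $2n+1$, the pullback of a nowhere-vanishing top-degree form by $\F L$ is nonzero at a point $x$ if and only if $\T_x\F L$ is an isomorphism, i.e.\ if and only if $\F L$ is a local diffeomorphism near $x$. Therefore $\eta_L\wedge(\d\eta_L)^n$ is a volume form on all of $\T Q\times\R$ exactly when $\F L$ is everywhere a local diffeomorphism, which is (1) $\iff$ (3).

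The computations are routine; the only point requiring care is the clean identification $\eta_L=\F L^*\eta_Q$, which I would verify directly in coordinates. As a self-contained alternative to the pullback argument, one can compute $\eta_L\wedge(\d\eta_L)^n$ straight from~\eqref{eq:etaL}: the only term of $(\d\eta_L)^n$ carrying all $n$ differentials $\d v^i$ is the $n$-th power of $-\frac{\partial^2 L}{\partial v^j\partial v^i}\,\d v^j\wedge\d q^i$, whose coefficient is, up to sign and a factor $n!$, the determinant $\det(\partial^2 L/\partial v^i\partial v^j)$, and this must then be wedged with the $\d s$ supplied by $\eta_L$. This route makes the Hessian determinant explicit but replaces the conceptual pullback step by a longer expansion, and its main bookkeeping obstacle is checking that wedge antisymmetry produces the determinant rather than a permanent.
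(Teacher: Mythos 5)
Your proof is correct, but it takes a partly different route from the paper's. The paper proves the proposition entirely in natural coordinates: it writes $\F L(q^i,v^i,s)=(q^i,\partial L/\partial v^i,s)$ and $\F^2L(q^i,v^i,s)=(q^i,W_{ij},s)$ with $W_{ij}=\partial^2L/\partial v^i\partial v^j$, and observes that each of the three conditions amounts to the matrix $W$ being everywhere non-singular --- for condition (3) this implicitly means the direct expansion of $\eta_L\wedge(\d\eta_L)^n$, which is exactly the ``self-contained alternative'' you sketch at the end. Your handling of (1) $\iff$ (2) coincides with this. Where you genuinely diverge is (1) $\iff$ (3): instead of expanding the wedge power, you invoke the identity $\eta_L=\F L^*(\eta_Q)$ (which the paper only states later, in the subsection on the Hamiltonian formalism for contact Lagrangian systems, and which you verify independently from~\eqref{eq:etaL}), and then use that the pullback of a volume form under a map between equal-dimensional manifolds is nowhere vanishing precisely where the differential is an isomorphism. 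This buys conceptual clarity and eliminates the combinatorial bookkeeping of checking that the coefficient of the top-degree term is $\det W$ rather than a permanent; what the paper's coordinate computation buys in exchange is an explicit, self-contained verification that exhibits the Hessian determinant directly, without needing the pullback identity as a prior ingredient. Both arguments are sound, and your observation that the two steps can be decoupled --- rank of $\F L$ via the inverse function theorem, contactness via functoriality of pullback --- is a clean way to organize the equivalence.
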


The proof of this result follows from the expressions in natural coordinates, where
$$
{\cal F}L:(q^i,v^i, s)  \longrightarrow  \left(q^i,
\frac{\displaystyle\partial L}{\displaystyle\partial v^i}, s
\right)\,,
$$
$$
{\cal F}^2 L(q^i,v^i,s) = (q^i,W_{ij},s)
\,,\quad
\hbox{with}\ 
W_{ij} = 
\left( \frac{\partial^2L}{\partial v^i\partial v^j}\right)
\,.
$$
The conditions in the proposition are also equivalent to the matrix 
$W= (W_{ij})$ 
being everywhere non-singular.

\begin{definition}
A Lagrangian function $L$ is said to be {\sl regular} if the equivalent
conditions in Proposition \ref{Prop-regLag} hold.
Otherwise, $L$ is called a {\sl singular} Lagrangian.
In particular, 
$L$ is said to be {\sl hyperregular} 
if $\F L$ is a global diffeomorphism.
\end{definition}

\begin{remark}
As a result of the preceding definitions and results,
every {\sl regular} Lagrangian system 
has associated the contact Hamiltonian system
$(\T Q\times\R, \eta_L, E_L)$.
\end{remark}

Given a regular Lagrangian system $(\T Q\times\R,L)$, the {\sl Reeb vector field} $R_L\in\X(\T Q\times\R)$ is uniquely determined by the conditions
\begin{equation}\label{eq:reeb}
    i_{R_L}\d\eta_L=0\,,\qquad i_{R_L}\eta_L=1 \,.
\end{equation}
Its local expression is
\begin{equation}
\label{coorReeb}
R_L=\frac{\partial}{\partial s}-W^{ji}\frac{\partial^2L}{\partial s \partial v^j}\,\frac{\partial}{\partial v^i} \,,
\end{equation}
where $(W^{ij})$ is the inverse of the partial Hessian matrix,
namely 
$W^{ij} W_{jk} = \delta^i_{k}$.

Note that the Reeb vector field does not appear
in the simplest form $\partial/\partial s$.
This is due to the fact that the natural coordinates in $\T Q \times \R$
are not Darboux coordinates for $\eta_L$.

For a regular Lagrangian, there exists a unique \textsl{Lagrangian vector field} such that
\begin{equation}\label{contact_EL}
        i_{X_L} \d \eta_L = \d E_L -( \Lie_{R_L} E_L ) \eta_L \,,\qquad
    i_{X_L} \eta_L = - E_L \,.
\end{equation}

\begin{proposition}
    If the Lagrangian $L$ is regular, then $X_L$ is a second-order differential vector field and its integral curves satisfy the so-called Herglotz--Euler--Lagrange equations. 

    In local coordinates, this means that 
    $\displaystyle X_L = v^i \frac{\partial}{\partial q^i} + f^i \frac{\partial}{\partial v^i} + g \frac{\partial}{\partial s}$, with 
    \begin{align}
        g&=L\,,
        \\
        \parderr{L}{v^j}{v^i}f^j + \parderr{L}{q^j}{v^i}v^j +&
        \parderr{L}{s}{v^i} L -\parder{L}{q^i}  = \parder{L}{s}\parder{L}{v^i}
    \end{align}
\end{proposition}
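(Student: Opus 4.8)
The plan is to work in the natural coordinates $(q^i,v^i,s)$ and to extract the components of $X_L$ directly from the two defining conditions in~\eqref{contact_EL}, rather than assuming the second-order form a priori. Existence and uniqueness of $X_L$ are already guaranteed, since for a regular $L$ the triple $(\T Q\times\R,\eta_L,E_L)$ is a contact Hamiltonian system; so it suffices to compute the components of the unique solution. I would write $X_L = A^i\,\partial/\partial q^i + B^i\,\partial/\partial v^i + C\,\partial/\partial s$ with $A^i,B^i,C$ unknown functions, and record the ingredients already available: $\eta_L=\d s-\parder{L}{v^i}\d q^i$, the displayed expression of $\d\eta_L$, and $E_L=v^i\parder{L}{v^i}-L$, whose differential has components $\parder{E_L}{v^i}=v^j\parderr{L}{v^j}{v^i}$, $\parder{E_L}{q^i}=v^j\parderr{L}{v^j}{q^i}-\parder{L}{q^i}$ and $\parder{E_L}{s}=v^j\parderr{L}{s}{v^j}-\parder{L}{s}$.

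First I would contract $X_L$ with $\d\eta_L$ and match the resulting $1$-form against $\d E_L-(\Lie_{R_L}E_L)\eta_L$, splitting the equality into its $\d q^i$, $\d v^i$ and $\d s$ components. The $\d v^i$-component reduces to $W_{ij}A^j=W_{ij}v^j$, and since $L$ is regular the partial Hessian $(W_{ij})$ is invertible, forcing $A^i=v^i$; this is precisely the second-order condition ${\cal J}\circ X_L=\Delta$, so $X_L$ is a {\sc sode}. Substituting $A^i=v^i$ into the $\d s$-component turns it into the identity $\Lie_{R_L}E_L=-\parder{L}{s}$, which I would verify independently from the coordinate form~\eqref{coorReeb} of $R_L$ together with $W^{ji}W_{ki}=\delta^j_k$; this shows that the $\d s$-component imposes no further restriction and at the same time pins down the dissipation coefficient.

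Next I would use the second defining condition $i_{X_L}\eta_L=-E_L$: with $A^i=v^i$ already established, it collapses to $C-v^i\parder{L}{v^i}=L-v^i\parder{L}{v^i}$, giving $C=L$, that is $g=L$. Finally, the $\d q^i$-component of the first equation, after inserting $A^i=v^i$, $C=L$ and $\Lie_{R_L}E_L=-\parder{L}{s}$ and cancelling the common term $v^j\parderr{L}{q^i}{v^j}$ appearing on both sides, becomes exactly $\parderr{L}{v^j}{v^i}B^j+\parderr{L}{q^j}{v^i}v^j+\parderr{L}{s}{v^i}L-\parder{L}{q^i}=\parder{L}{s}\parder{L}{v^i}$, which with $B^i=f^i$ is the asserted Herglotz--Euler--Lagrange equation.

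The one delicate point is the bookkeeping in contracting $X_L$ with $\d\eta_L$, specifically the antisymmetric term $\parderr{L}{q^j}{v^i}\,\d q^j\wedge\d q^i$, whose contraction produces $v^j\bigl(\parderr{L}{q^i}{v^j}-\parderr{L}{q^j}{v^i}\bigr)\d q^i$; exactly one of these two pieces must cancel against the corresponding term coming from $\d E_L$, and keeping the index relabelling straight is where an error is most likely to creep in. Everything else is routine once the three components are separated and regularity is invoked to solve for $A^i$.
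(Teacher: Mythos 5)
Your proof is correct: all three component equations come out exactly as you describe — the $\d v^i$-component plus regularity of $W_{ij}$ forces $A^i=v^i$, the condition $i_{X_L}\eta_L=-E_L$ then gives $C=L$, the $\d s$-component reduces to the identity $\Lie_{R_L}E_L=-\parder{L}{s}$ (which indeed follows from \eqref{coorReeb} and $W^{ij}W_{jk}=\delta^i_k$), and the $\d q^i$-component, after the cancellation of $v^j\parderr{L}{q^i}{v^j}$, is precisely the stated Herglotz--Euler--Lagrange equation. The paper states this proposition without proof (it is a known fact of contact Lagrangian mechanics, and the paper only remarks that ``one can check'' the identity $\Lie_{R_L}E_L=-\parder{L}{s}$), so your direct coordinate computation from \eqref{contact_EL} is exactly the argument the paper leaves implicit, and it matches the analogous component-by-component computation the paper carries out for the evolution operator $K$ in Section~\ref{sec:contact_K_operator}.
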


Note that the Herglotz--Euler--Lagrange equations, since they come from a variational principle, exist regardless of the regularity of the Lagrangian. As a matter of fact, we can write these equations in a geometric way even for singular Lagrangians. In general, the solutions to the Herglotz--Euler--Lagrange equations can be seen as the integral curves of the second-order vector fields $X_L$ satisfying the equations
\begin{equation}
        i_{X_L} \d \eta_L = \d E_L +\left(\parder{L}{s}\right) \eta_L \,,\qquad
    i_{X_L} \eta_L = - E_L \,.
\end{equation}
These are well-defined since $\tparder{L}{s}$ is defined canonically in $\Tan Q \times \R$, and they are equivalent to \eqref{contact_EL}, as one can check that, in the regular case, $\Lie_{R_L} E_L = -\tparder{L}{s}$. If the Lagrangian is singular, these equations do not directly imply that the vector field is second-order, and the condition must be added. Also, in the singular case, solutions may not exist at every point, and if they do, they might not be unique. This requires a more careful study of both the Lagrangian and Hamiltonian formalisms in the singular case.

%%%%%%%%%%%%%%%%%%%%%%%%%%%%%%%%%%%%%%%%%%%%%%%%%%%%%%%%%%%%%%%%
\subsection{Hamiltonian formalism for contact Lagrangian systems}
%%%%%%%%%%%%%%%%%%%%%%%%%%%%%%%%%%%%%%%%%%%%%%%%%%%%%%%%%%%%%%%%

Let us analyse how to construct a contact Hamiltonian formalism that is related to the contact Lagrangian formalism we presented before. That is, we describe a contact Hamiltonian system on $\Tan^*Q \times \R$ which is in correspondence, via the Legendre map, with the Lagrangian system $(\Tan Q \times \R, L)$.

Recall, from Example \ref{ex:2.2}, that $\Tan^* Q \times \R$ is endowed with a natural contact form $\eta_Q = \d s - \theta_Q$.
A direct computation shows that
$$ \eta_L = \FL ^* (\eta_Q)\,. $$
If the Lagrangian is regular, then the energy $E_L$ is, at least locally, $\FL$-projectable (since $\FL$ is a local diffeomorphism). Namely, we can (locally) always find a \textsl{contact Hamiltonian function} $H\colon \Tan ^* Q \times \R \to \R$ such that
$E_L = \FL^*(H) = H\circ\FL$.
In the hyperregular case, the energy is globally $\FL$-projectable and the system $(\Tan^* Q\times \R , \eta_Q, H)$ satisfies
$$\FL_* (R_L) = R\,, \qquad \text{and} \qquad \FL_* (X_L)=X_H\,,$$
for $R,X_H \in \X (\Tan^* Q \times \R)$, respectively, the Reeb vector field of $\eta_Q$ and the contact Hamiltonian vector field associated with the system. This means that the two formalisms are $\FL$-related, and the map $\xi \mapsto \FL \circ \widetilde \xi$, where $\xi \colon I \to Q \times \R$ is a solution to the Herglotz--Euler--Lagrange equations, sends solutions to solutions. Conversely, if $\psi \colon I \to \Tan^* Q \times \R$ is a integral path of $X_H$, then the map $\psi \mapsto \pi_0 \circ \psi$, with $\pi_0 \colon \Tan^* Q \times \R \to Q\times \R$ the canonical projection, also sends solutions of the contact Hamiltonian formalism to solutions of the Herglotz--Euler--Lagrange equations.

Let us point out that, regardless of the regularity of the Lagrangian, any solution on $\Tan^* Q \times \R$ which is $\FL$-related to a solution to the Herglotz--Euler--Lagrange equations fulfils the equations
\begin{equation}\label{Herglotz--Dirac_eq}
    \begin{dcases}
    p= \parder{L}{v}\,,
    \\
    \dot p = \parder{L}{s}\parder{L}{v} + \parder{L}{q}\,,
    \\
        \dot s = L\,,
    \end{dcases}
\end{equation}
called the \textsl{Herglotz--Dirac} equations.

%%%%%%%%%%%%%%%%%%%%%%%%%%%%%%%%%%%%%%%%%%%%%%%%%%%%%%%%%%%%%%%%
\subsection{Almost-regular Lagrangian functions}
\label{sec:almost_regular}
%%%%%%%%%%%%%%%%%%%%%%%%%%%%%%%%%%%%%%%%%%%%%%%%%%%%%%%%%%%%%%%%
In this section we present some results on singular Lagrangian functions. Namely, we extend the results presented on~\cite{GraPon_2001} to the case of contact mechanics. We refer the reader to the aforementioned article for a detailed presentation of the 
results and their proofs.

Let us start by stating some preliminary results about fibre derivatives.
Recall that,
given a (not necessarily linear) bundle map $f \colon E \to F$
between two vector bundles over a manifold~$B$,
the fibre derivative of~$f$ is the map
$\mathcal{F}f \colon 
E \to \mathrm{Hom}(E,F) \approx F \otimes E^*$
obtained by
restricting $f$ to the fibres, 
$f_b \colon E_b \to F_b$,
and computing the usual derivative
of a map between vector spaces:
$\mathcal{F}f(e_b) = \mathrm{D} f_b(e_b)$
(see \cite{Gra_00} for a detailed account).
This applies in particular 
when the second vector bundle is trivial of rank~1,
that is,
for a function
$f \colon E \to \R$;
then
$\mathcal{F}f \colon E \to E^*$.
This map also has a fibre derivative
$\mathcal{F}^2 f \colon E \to E^* \otimes E^*$,
which can be called the fibre hessian of $f$:
for every $e_b \in E$,
$\mathcal{F}^2 f(e_b)$ can be considered as a
symmetric bilinear form on $E_b$.
One can check that
$\mathcal{F}f$ is a local diffeomorphism 
at $e \in E$
if, and only if, $\F^2f(e)$ is non-degenerate.
Indeed, it is a direct consequence of the fact that $\Ker \Tan (\F f) \subset V(E)$ and
$$v_x \in \Ker \F ^2 f(e_x) \quad \Longleftrightarrow \quad \mathrm{vl}_{E}(e_x,v_x) \in \Ker \T_{e_x} (\F f)\,,$$
which can be easily seen with the local expressions of the objects involved.

Another interesting result involving the fibre derivative that will be useful later is the following. Let $\xi \colon E \to E $ be a bundle map with associated vertical field $X= \xi ^v = \mathrm{vl}_E \circ \xi$ on $E$, and let $f \colon E \to \R$ be a function, then
$$X \cdot f = \langle\F f, \xi \rangle\,.$$
In particular, if we take $\xi = \mathrm{Id}_E$, we have
\begin{equation}\label{Liouville_prop_1}
	(\Delta_E \cdot f) (e_x) = \langle \F f(e_x) , e_x \rangle\,,
\end{equation}
where $\Delta_E = \mathrm{vl}_E(\Id_E)$ is the Liouville vector field of the vector bundle $E$. By applying the chain rule to this last expression, we also obtain 
\begin{equation}\label{Liouville_prop_2}
	\F (\Delta_E \cdot f)(e_x) = \F f (e_x)+\F^2 f(e_x)\cdot e_x\,.
\end{equation}

In the case of contact Lagrangian systems, our framework involves two real vector bundles over the same base, namely 
$  \tau_0\colon\Tan Q \times \mathbb{R} \to Q \times \mathbb{R}
$ and $
 \pi_0 \colon \Tan^*Q \times \mathbb{R} \to Q \times \mathbb{R}.$
Taking this into account, one can readily see that most of the theory of singular Lagrangian systems extends to contact mechanics. In particular, the fibre derivative of a Lagrangian function $L\colon\T Q\times\R\to\cT Q\times\R$ is its Legendre map.

As in the case of ordinary
mechanics, to obtain some general results in the singular case, we still need to impose some weak regularity conditions on the Lagrangian functions.
Namely, following \cite{GN_79}, we say that a Lagrangian function $L\in \Cinfty(\T Q \times \R)$ is \textsl{almost-regular} if: 
(i)   the image of the Legendre map $P_0 \coloneqq \FL(\Tan Q\times\R) \overset{j}{\hookrightarrow} \Tan^*Q\times \R$ is a closed submanifold, called the \textsl{primary Hamiltonian constraint submanifold}, 
(ii) the induced map $\FL_0 \colon \Tan Q\times \R\to P_0$ is a submersion and has connected fibres.
From a local point of view, it suffices to assume that $\FL$ has constant rank.

Since $P_0 \coloneqq \FL (\Tan Q\times \R)$ is a closed submanifold, it is locally defined by the vanishing of an independent set of functions $\{\phi_{\mu}\}_{\mu=1,\dots,m}$, with linearly independent differentials $\d \phi_{\mu}$ at every point. We call these functions \textsl{primary Hamiltonian constraint functions}. 
Now, we recall the following two lemmas:
\begin{lemma}\label{lemma_ker_1}
    Let $\alpha \colon N \to P $ be a submersion at $y\in N$. Then, there exists an open neighbourhood $V\subset N$ of $y$ such that $\alpha(V)\subset P$ is a submanifold. Additionally, we have:
    \begin{enumerate}
        \item $\mathrm{Im} \Tan_y \alpha = \Tan_{\alpha(y)}(\alpha(V))$\,.
        \item If $\alpha(V) \subset P$ is the submanifold defined locally by the vanishing of a set of functions $\phi_{\mu}$, with linearly independent differentials $\d \phi_{\mu}$ at every point, then $\Ker ^t \Tan_y\alpha$ admits $\d_{\alpha(y)} \phi_{\mu}$ as a basis, and $\Ker ^t \Tan\alpha$ admits $\d\phi_{\mu}\circ \alpha$ as a frame over $V$.
    \end{enumerate}
\end{lemma}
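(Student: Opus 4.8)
The plan is to reduce the whole statement to the constant-rank theorem. The hypothesis that $\alpha$ is a submersion at $y$ provides a neighbourhood of $y$ on which $\Tan\alpha$ has constant rank $r$ (in the almost-regular setting at hand this is exactly the standing assumption that $\FL$ has constant rank). First I would invoke the rank theorem to obtain charts $(\varphi,V)$ around $y$ and $(\psi,W)$ around $\alpha(y)$ in which $\psi\circ\alpha\circ\varphi^{-1}$ is the linear projection $(x^1,\dots,x^n)\mapsto(x^1,\dots,x^r,0,\dots,0)$. Shrinking $V$ to a cube, its image $\alpha(V)$ becomes the coordinate slice $\{y^{r+1}=\dots=y^{\dim P}=0\}$ in $W$, hence an embedded $r$-dimensional submanifold of $P$; this is the first assertion of the lemma. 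In the same chart $\Tan_y\alpha$ is the projection onto the first $r$ coordinates, so $\Im\Tan_y\alpha$ is precisely the coordinate subspace tangent to the slice, giving $\Im\Tan_y\alpha=\Tan_{\alpha(y)}(\alpha(V))$, which is part~(1). (Intrinsically, once $\alpha(V)$ is known to be embedded the inclusion $\Im\Tan_y\alpha\subseteq\Tan_{\alpha(y)}(\alpha(V))$ is automatic, and the rank theorem forces equality of dimensions.)

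For part~(2) I would pass to the transpose and use only elementary linear algebra. For a linear map $A\colon U\to U'$ one has $\Ker{}^{t}A=(\Im A)^\circ$, the annihilator of its image; applying this to $A=\Tan_y\alpha$ and inserting part~(1) gives $\Ker{}^{t}\Tan_y\alpha=\bigl(\Tan_{\alpha(y)}(\alpha(V))\bigr)^\circ$. Since $\alpha(V)$ is locally the zero set $\{\phi_\mu=0\}$ with linearly independent differentials, its tangent space at $\alpha(y)$ is $\bigcap_\mu\Ker\d_{\alpha(y)}\phi_\mu$, and the annihilator of this intersection is the span of the $\d_{\alpha(y)}\phi_\mu$. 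As these covectors are independent by hypothesis, and their number equals $\dim P-r=\dim\Ker{}^{t}\Tan_y\alpha$, they form a basis of $\Ker{}^{t}\Tan_y\alpha$; this is the pointwise part of~(2).

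It remains to upgrade this to the frame statement over all of $V$. The argument above holds verbatim at each $y'\in V$, because the rank is constant on $V$ and $\alpha(V)$ is still locally the zero set of the same $\phi_\mu$; consequently $\Ker{}^{t}\Tan\alpha$, viewed as a subset of the pullback bundle $\alpha^{*}\cT P$ over $V$, has constant fibre dimension $\dim P-r$ and is therefore a vector sub-bundle, of which the smooth sections $y'\mapsto\d_{\alpha(y')}\phi_\mu=(\d\phi_\mu\circ\alpha)(y')$ are a pointwise basis, i.e.\ a frame. I expect the only real care to be bookkeeping rather than a genuine difficulty: one must shrink $V$ enough that $\alpha(V)$ is \emph{embedded} (so that its tangent space and defining functions are meaningful), and one must read the frame assertion inside $\alpha^{*}\cT P$, where the constancy of $\rk\Tan\alpha$ is precisely what guarantees that the kernel is a genuine bundle rather than a merely pointwise family of subspaces.
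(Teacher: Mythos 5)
Your proof is correct, but it cannot be compared line-by-line with the paper, because the paper offers no proof of this lemma: it is explicitly \emph{recalled} (together with Lemma~\ref{lemma_ker_2}) from the earlier treatment of singular Lagrangian systems in~\cite{GraPon_2001}, to which the reader is referred for proofs. Your argument is the standard constant-rank one and every step checks out: the rank theorem puts $\alpha$ in projection normal form, so that after shrinking $V$ the image $\alpha(V)$ is an embedded slice and $\Im\Tan_y\alpha=\Tan_{\alpha(y)}(\alpha(V))$; the linear-algebra identity $\Ker{}^{t}A=(\Im A)^{\circ}$ together with the dimension count $\#\{\phi_\mu\}=\dim P-\rk\Tan_y\alpha=\dim\Ker{}^{t}\Tan_y\alpha$ gives the basis statement; and constancy of the rank on $V$ is exactly what makes $\Ker{}^{t}\Tan\alpha$ a genuine vector subbundle of $\alpha^{*}(\cT P)$, of which the sections $\d\phi_\mu\circ\alpha$ are then a frame. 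The one point worth flagging is your (correct) silent repair of the hypothesis: taken literally, ``submersion at $y$'' forces the local rank to equal $\dim P$, so $\alpha(V)$ is open in $P$, the family $\{\phi_\mu\}$ is empty, and the lemma is vacuous; moreover the map to which the lemma is actually applied, $\FL\colon\T Q\times\R\to\cT Q\times\R$ for an almost-regular singular Lagrangian, is \emph{not} a submersion in that literal sense. The hypothesis genuinely needed, and the one your proof uses, is that $\alpha$ has constant rank near $y$ (equivalently, that it is a submersion onto its image), which is precisely the paper's standing local assumption on $\FL$; with that reading your proof is complete and fills a gap the paper delegates to the literature.
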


\begin{lemma} \label{lemma_ker_2}
    Let $P_0 \overset{j}{\hookrightarrow} P$ be a submanifold defined by the vanishing of $\{\phi_{\mu}\}_{\mu=1,\dots,m}$, such that their differentials $\d \phi_{\mu}$ are linearly independent at every point of $P_0$. Then, $\Ker ^t\Tan (j)$ admits $\{\d \phi_{\mu} |_{P_0}\}_{\mu=1,\dots,r}$ as a frame. Also, at every point $x\in P_0$ a tangent vector $v_x \in \Tan_x P$ is in $\Tan_x P_0$ if, and only if, for every $\phi_{\mu}$, it satisfies $\d\phi_{\mu}(v_x)=0$.
\end{lemma}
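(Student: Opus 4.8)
The plan is to reduce both assertions to pointwise linear algebra in the cotangent spaces and then to globalize them using the assumed independence of the differentials. The key observation is that, for the inclusion $j\colon P_0\hookrightarrow P$, the tangent map $\T_x j\colon \T_x P_0\to \T_x P$ is injective with image the copy of $\T_x P_0$ sitting inside $\T_x P$; consequently its transpose ${}^t\T_x j\colon \cT_x P\to \cT_x P_0$ is just the restriction of covectors, $\alpha\mapsto \alpha|_{\T_x P_0}$. Hence $\Ker {}^t\T_x j$ is precisely the annihilator of $\T_x P_0$ inside $\cT_x P$, and the whole lemma reduces to the standard statement that the annihilator of the common kernel of independent functionals is their span.

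I would establish the second assertion first, since the first builds on it. Given $v_x\in \T_x P_0$, choose a curve $\gamma$ in $P_0$ with $\gamma(0)=x$ and $\gamma'(0)=v_x$; since each $\phi_\mu$ vanishes identically on $P_0$, differentiating $\phi_\mu\circ\gamma\equiv 0$ at $t=0$ gives $\d_x\phi_\mu(v_x)=0$, whence the inclusion $\T_x P_0\subseteq \bigcap_\mu \Ker \d_x\phi_\mu$. For the reverse inclusion I would count dimensions: the independence of the $\d_x\phi_\mu$ makes $v\mapsto (\d_x\phi_\mu(v))_\mu$ a surjection of $\T_x P$ onto $\R^m$, so its kernel $\bigcap_\mu \Ker \d_x\phi_\mu$ has dimension $\dim P-m=\dim P_0=\dim \T_x P_0$; matching dimensions turns the inclusion into an equality.

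For the first assertion I combine this with the identification above: $\Ker {}^t\T_x j$ is the annihilator of $\T_x P_0=\bigcap_\mu \Ker \d_x\phi_\mu$. The linear-algebra fact that the annihilator of $\bigcap_\mu \Ker f_\mu$ equals $\mathrm{span}\{f_\mu\}$ for independent functionals $f_\mu$ (proved by the same dimension count) then yields $\Ker {}^t\T_x j=\mathrm{span}\{\d_x\phi_\mu\}$. Because the $\d_x\phi_\mu$ are independent at every $x\in P_0$, each $\d_x\phi_\mu$ lies in $\Ker {}^t\T_x j$, the sections $\{\d\phi_\mu|_{P_0}\}$ are everywhere linearly independent, and they span each fibre of the subbundle $\Ker {}^t\T j$ of $\cT P|_{P_0}$; they are therefore a frame.

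There is no real obstacle here, only one point that must be kept clear: $\d\phi_\mu|_{P_0}$ denotes the $1$-form $\d\phi_\mu$ on $P$ evaluated at points of $P_0$ (a section of $\cT P|_{P_0}$), and \emph{not} $\d(\phi_\mu|_{P_0})$, which vanishes identically because $\phi_\mu|_{P_0}\equiv 0$. The frame statement concerns the former. Every remaining computation is fibrewise and is made uniform in $x$ by the hypothesis on the differentials, so smoothness of the frame is automatic.
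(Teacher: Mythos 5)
Your proof is correct. There is, however, nothing in the paper to compare it against: Lemma~\ref{lemma_ker_2} is one of two lemmas the paper explicitly \emph{recalls} from the theory of singular Lagrangian systems, with all proofs deferred to the cited reference \cite{GraPon_2001}. Your argument is the standard one and fills this gap adequately: the transpose ${}^{t}\Tan_x (j)$ is restriction of covectors to $\Tan_x P_0$, so $\Ker {}^{t}\Tan_x (j)$ is the annihilator of $\Tan_x P_0$; the curve argument plus a dimension count gives $\Tan_x P_0 = \bigcap_\mu \Ker \d_x\phi_\mu$; and the annihilator of the common kernel of $m$ independent functionals is their $m$-dimensional span, so the $\d_x\phi_\mu$ span each fibre and, being smooth and everywhere independent along $P_0$, form a frame. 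Two small remarks. First, your dimension count silently uses $\dim P_0 = \dim P - m$; this follows from the regular level set theorem applied to $(\phi_1,\dots,\phi_m)$, and it deserves an explicit word, since it is the only place where the hypothesis that $P_0$ is exactly the common zero set of the $\phi_\mu$ --- rather than merely contained in it --- enters the argument. Second, your closing observation that $\d\phi_\mu|_{P_0}$ must be read as the section $x \mapsto \d_x\phi_\mu$ of $\cT P|_{P_0}$, and not as $\d(\phi_\mu|_{P_0}) = 0$, is exactly the right reading of the statement, and it matches how the paper applies the lemma (to obtain the frame $\{\d\phi_\mu \circ \FL\}$ of $\Ker {}^{t}\Tan(\FL)$).
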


We can apply Lemmas \ref{lemma_ker_1} and \ref{lemma_ker_2} to show that the $\d \phi_{\mu} \circ \FL$ form a (local) frame of the vector subbundle defined by $\Ker ^t\Tan (\FL) \subset \Tan^* (\Tan^*Q\times \R)$.

For every given function $g \in \Cinfty(\Tan^*Q \times \R)$, we can define the map
$$\gamma_g = \F g \circ \F L \colon \Tan Q \times \R \to \Tan Q \times \R\,,$$
where we denote by $\FL$ the fibre derivative of $L$, considered as a function on
$\tau_0 \colon \T Q\times\R \to Q \times \R$,
and $\F g$ the fibre derivative of $g$,
considered as a function on the vector bundle
$\pi_0 \colon \T^* Q\times\R \to Q \times \R$.

If we apply the natural bijection between fibre bundle maps and vertical vector fields; that is, the vertical lift $\mathrm{vl}_{\Tan Q \times \R }$, we obtain the vertical vector field
$$\Gamma_g \coloneqq \gamma_g^{v}=\mathrm{vl}_{\Tan Q \times \R}(\Id_{\Tan Q\times \R}, \F g \circ \FL)\,.$$
In local coordinates, these objects have the following expressions:
\begin{equation}
    \gamma_g \colon (q,v,s) \mapsto \left( q, \parder{g}{p} (\FL (q,v,s)),s \right)\,, \qquad \Gamma_g= \FL^{*}\left( \parder{g}{p}  \right)\parder{}{v}\,. 
\end{equation}
With these objects, we can prove the following result:
\begin{proposition}\label{Ker_TFL}
    The vector fields $\Gamma_{\mu} = \gamma_{\phi_{\mu}}^v$, constructed from the primary Hamiltonian constraint functions $\phi_{\mu}$, form a local frame for $\Ker \Tan(\FL)$. Their local expression is
    $$\Gamma_{\mu}=\gamma_{\mu}^i \parder{}{v^i}\,,$$
    where the functions
    $$\gamma_{\mu}^i= \FL^*\left(  \parder{\phi_{\mu}}{p_i} \right)$$
    form a basis of the kernel of the Hessian matrix $W=\displaystyle
\left( \frac{\partial^2 L}{\partial v^i \partial v^j} \right)$.
\end{proposition}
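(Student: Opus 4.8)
The plan is to reduce both assertions to a single linear-algebra statement on the fibres, namely that the column vectors $\gamma_\mu=(\gamma_\mu^i)$ form a local basis of $\Ker W$. Indeed, by the characterization recalled above, $v_x\in\Ker\F^2L(e_x)$ if and only if $\mathrm{vl}(e_x,v_x)\in\Ker\T_{e_x}(\FL)$, so $\Ker\T(\FL)$ is exactly the vertical lift of $\Ker\F^2L$, whose local coordinate expression is $\Ker W$. Since the vertical lift $\mathrm{vl}$ is a fibrewise linear isomorphism sending $\gamma_\mu$ to $\Gamma_\mu=\gamma_\mu^i\,\partial/\partial v^i$, the family $\{\Gamma_\mu\}$ is a local frame of $\Ker\T(\FL)$ precisely when $\{\gamma_\mu\}$ is a local frame of $\Ker W$. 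Thus it suffices to prove the latter.

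First I would check that each $\gamma_\mu$ lies in $\Ker W$. Because $P_0=\FL(\T Q\times\R)$ is cut out by the vanishing of the $\phi_\mu$, we have $\phi_\mu\circ\FL\equiv0$ on $\T Q\times\R$, and hence $\FL^*(\d\phi_\mu)=0$. Pairing this identity with the coordinate field $\partial/\partial v^i$ and using $\T(\FL)(\partial/\partial v^i)=W_{ji}\,\restr{\partial/\partial p_j}{\FL}$ yields $\gamma_\mu^jW_{ji}=0$; as $W$ is symmetric, this is $W\gamma_\mu=0$. (Equivalently, this is the chain rule applied to $\phi_\mu(q,\partial L/\partial v,s)=0$ in the variable $v^i$.) Consequently $\Gamma_\mu\in\Ker\T(\FL)$.

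It remains to see that the $m$ vectors $\gamma_\mu$ form a basis of $\Ker W$. A dimension count gives the right cardinality: the identification $\Ker\T(\FL)=\mathrm{vl}(\Ker W)$ together with $\dim(\T Q\times\R)=\dim(\cT Q\times\R)$ yields $\dim\Ker W=\cork\FL=m$, the codimension of $P_0$. The main obstacle is the linear independence, which does not follow directly from that of the $\d\phi_\mu$, since the $\gamma_\mu$ retain only the $\d p_i$-components of the $\d\phi_\mu$. I would settle it using the fact, already established from Lemmas~\ref{lemma_ker_1} and~\ref{lemma_ker_2}, that $\{\d\phi_\mu\circ\FL\}$ is a local frame of $\Ker^t\T(\FL)=(\Im\T(\FL))^\circ$, together with the observation that the projection of a covector onto its $\d p$-component restricts to an isomorphism $\Ker^t\T(\FL)\xrightarrow{\,\sim\,}\Ker W$. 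The one point to verify is the injectivity of this projection: if $\alpha\in(\Im\T(\FL))^\circ$ has vanishing $\d p$-component, then $\alpha=\pi_0^*\bar\alpha$ for some covector $\bar\alpha$ on $Q\times\R$; since $\FL$ covers the identity, $\T\pi_0\circ\T(\FL)=\T\tau_0$ is surjective, so $\alpha$ annihilates a subspace projecting onto all of $\T(Q\times\R)$, forcing $\bar\alpha=0$ and thus $\alpha=0$ (this can also be read off the block form of the Jacobian of $\FL$). Under this isomorphism $\d\phi_\mu\circ\FL\mapsto\gamma_\mu$, so the frame $\{\d\phi_\mu\circ\FL\}$ is carried to a basis $\{\gamma_\mu\}$ of $\Ker W$, which completes the argument.
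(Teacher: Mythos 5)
Your proof is correct, and for the kernel-membership half it is essentially the paper's argument written in coordinates: your identity $\gamma_\mu^j W_{ji}=0$, obtained by differentiating $\phi_\mu\circ\FL\equiv 0$ in the fibre directions, is exactly the coordinate form of the chain-rule identity $\F(\phi_\mu\circ\FL)=\F^2L\bullet\gamma_{\phi_\mu}$ on which the paper relies, and both you and the paper pass between $\Ker W$ and $\Ker\Tan(\FL)$ via the vertical-lift characterization $v_x\in\Ker\F^2L(e_x)\Leftrightarrow \mathrm{vl}(e_x,v_x)\in\Ker\Tan_{e_x}(\FL)$. Where you genuinely diverge is on the frame statement. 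The paper disposes of it with ``since, by definition, we take the functions $\phi_{\mu}$ to be linearly independent, the result follows immediately''; as you correctly point out, this is not immediate, since the $\gamma_\mu$ retain only the $\d p_i$-components of the $\d\phi_\mu$, and one block of components of linearly independent covectors need not remain independent. Your extra argument closes exactly this point: projection onto the $\d p$-component is injective on $\Ker ^t\Tan(\FL)=(\Im\Tan(\FL))^\circ$ (a covector of the form $\pi_0^*\bar\alpha$ annihilating $\Im\Tan(\FL)$ must vanish because $\Tan\pi_0\circ\Tan(\FL)=\Tan\tau_0$ is surjective), it lands in $\Ker W$ by the same pairing computation as before, and the dimension count $\dim\Ker W=m=\dim\Ker ^t\Tan(\FL)$ then makes it an isomorphism carrying the frame $\{\d\phi_\mu\circ\FL\}$ provided by Lemmas~\ref{lemma_ker_1} and~\ref{lemma_ker_2} to a basis $\{\gamma_\mu\}$ of $\Ker W$. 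So your proof is somewhat longer and less intrinsic than the paper's (which stays at the level of fibre derivatives), but it actually proves the independence claim that the paper only asserts; this is a strengthening of the paper's proof rather than a deviation from it.
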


This last result follows directly from applying the chain rule, since
$$\F (g \circ \F L) = \F^2 L \bullet \gamma_g\,, $$
where the symbol $\bullet$ denotes the composition of the images of both maps. This implies that if $g$ vanishes on the image $\FL (\Tan Q \times \R) \subset \Tan^* Q \times \R$, then $\gamma_g$ is in the kernel of $\F^2 L$. Since, by definition, we take the functions $\phi_{\mu}$ to be linearly independent, the result follows immediately.
The result can also be easily proved using the local coordinate expressions.

Now we give the most important result for the characterization of a Hamiltonian formalism in this context~\cite{BGPR_86,GN_79}.
\begin{proposition}
	If the Legendre map $\FL$ is a submersion, then the Lagrangian energy function $E_L$ is locally projectable. That is, locally, there exists a function $H$ such that $E_L= H\circ\FL$.
\end{proposition}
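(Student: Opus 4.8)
The plan is to reduce the statement to a pointwise annihilation condition. Since $\FL$ is a submersion, around any point it is, in suitable coordinates, the standard projection $(x,y)\mapsto x$, whose fibres are connected coordinate slices. A function $g$ on the source descends as $g=H\circ\FL$ for some $H$ on the target precisely when $g$ does not depend on the fibre variables, i.e.\ when $\d g$ annihilates the vertical directions $\Ker\T\FL$. Thus the entire problem becomes showing that $\Lie_Z E_L=0$ for every $Z\in\Ker\T(\FL)$.

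For this I would invoke Proposition~\ref{Ker_TFL}: locally, $\Ker\T(\FL)$ is spanned by the vertical vector fields $\Gamma_\mu=\gamma_{\phi_\mu}^v$, whose underlying bundle maps $\gamma_\mu=\F\phi_\mu\circ\FL$ take values in $\Ker\F^2 L$ (in coordinates, $\Gamma_\mu=\gamma_\mu^i\,\partial/\partial v^i$ with $\gamma_\mu^i\in\Ker W$). Since every element of $\Ker\T\FL$ is a $\Cinfty$-combination of the $\Gamma_\mu$, it suffices to check $\Lie_{\Gamma_\mu}E_L=0$ for each $\mu$.

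The key computation is to identify the fibre derivative of the energy. Applying identity~\eqref{Liouville_prop_2} to $f=L$ on the bundle $\tau_0$ yields $\F(\Delta\cdot L)=\F L+\F^2 L\cdot e$, where $e$ denotes the tautological (Liouville) section; hence, since $E_L=\Delta(L)-L$, we get $\F E_L=\F^2 L\cdot e$. In words, the fibre derivative of the energy is the fibre Hessian contracted once with the Liouville section. Using the pairing rule $\Gamma_\mu\cdot E_L=\langle\F E_L,\gamma_\mu\rangle$ valid for a vertical field, this gives $\Lie_{\Gamma_\mu}E_L=\F^2 L(e,\gamma_\mu)$, which vanishes because $\gamma_\mu\in\Ker\F^2 L$ and $\F^2 L$ is symmetric. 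In natural coordinates this is the one-line confirmation $\partial E_L/\partial v^j=v^i W_{ij}$ together with $\gamma_\mu^i W_{ij}=0$.

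I do not anticipate a serious obstacle; the substance is the clean identity $\F E_L=\F^2 L\cdot e$ combined with Proposition~\ref{Ker_TFL}. The only point requiring mild care is the final descent step of the first paragraph, where one must use the submersion normal form to guarantee that annihilation of $\Ker\T\FL$ by $\d E_L$ genuinely produces a local factorization through $\FL$, and that the local fibres are connected so that the resulting $H$ is single-valued. This is automatic in the submersion chart, so the argument is essentially complete once the vanishing $\Lie_{\Gamma_\mu}E_L=0$ is established.
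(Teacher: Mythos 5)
Your proposal is correct and takes essentially the same approach as the paper: the identity $\F E_L = \F^2 L\cdot(v_q,s)$ derived from \eqref{Liouville_prop_2}, combined with Proposition~\ref{Ker_TFL}, gives $\Gamma_\mu\cdot E_L = \langle \F E_L,\gamma_\mu\rangle = 0$ and hence local projectability. The only addition is that you spell out the final descent step (submersion normal form and connectedness of local fibres), which the paper leaves implicit.
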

Again, this proposition can be proved using the local expressions of the objects involved. We can also use the properties of the Liouville vector field~\eqref{Liouville_prop_1}, \eqref{Liouville_prop_2}, as they imply
\begin{equation}
E_L(v_q,s)=(\Delta_{\Tan Q \times \R}\cdot L)(v_q,s)-L(v_q,s) = \langle  \FL (v_q,s),(v_q,s) \rangle-L(v_q,s)\,, 
\end{equation}
\begin{equation} \label{FEL_eq}
	\F E_L (v_q,s)= \F^2L(v_q,s)\cdot (v_q,s)\,.
\end{equation}
Hence, one has
$$\Gamma_{\mu}\cdot E_L = \langle\F E_L ,\gamma_{\mu}\rangle = 0\,,$$
which is equivalent to the Lagrangian energy being projectable via $\FL$, by Proposition~\ref{Ker_TFL}.

Thus, if the Lagrangian function $L$ is almost-regular, then $E_L$ is \textsl{globally} $\FL$-projectable at $P_0$, that is, there exists a \textsl{unique} function $H_0\colon P_0 \to \R$, the \textsl{Hamiltonian function}, such that $\FL^*(H_0)=E_L$. 
Also, as $P_0$ is assumed to be closed, the function $H_0$ can be extended (non-uniquely) to a function $H$ defined on $\Tan^*Q \times \R$.
Note that, for a local study, it is enough to just assume that the Legendre map $\FL$ is a submersion.

Finally, let us prove a result which relates the vector fields $\Gamma_H$ and $\Gamma_\mu$, defined, respectively, from a choice of Hamiltonian function and of primary Hamiltonian constraints, with the Liouville vector field. 
\begin{proposition}\label{prop_resolution_identity}
    Given an almost regular Lagrangian $L$, the choice of a Hamiltonian function $H$ and a set of primary Hamiltonian constraints $\phi_{\mu}$ allows us to write, locally, the identity map of $\Tan Q \times \R$ as
    \begin{equation} \label{resolution_identity}
        \Id _{\Tan Q \times \R} = \gamma_H +  \sum_{\mu}\lambda^{\mu} \gamma_{\phi_{\mu}}\,,
    \end{equation}
    where the $\lambda^{\mu} \in \Cinfty (\Tan Q\times \R)$ are uniquely determined functions.
\end{proposition}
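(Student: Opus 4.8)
The plan is to interpret the claimed identity as an equality of bundle endomorphisms of $\tau_0\colon \T Q\times\R\to Q\times\R$ covering $\id_{Q\times\R}$, and to reduce it to a fibrewise linear-algebra statement. Since $\Id_{\T Q\times\R}$, $\gamma_H$ and each $\gamma_{\phi_\mu}$ all cover the identity on $Q\times\R$, over a fixed base point $(q,s)$ the asserted equation is simply a decomposition of the fibre vector $v$ as $\gamma_H(q,v,s)$ plus a combination of the $\gamma_{\phi_\mu}(q,v,s)$. The entire content is therefore to show that the difference $\Id_{\T Q\times\R}-\gamma_H$ has fibre values in $\Ker\F^2 L$, after which Proposition~\ref{Ker_TFL} does the rest.

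To locate this kernel I would compute the fibre derivative $\F E_L$ in two ways. From the Liouville identity \eqref{FEL_eq} one has $\F E_L = \F^2 L\cdot(v_q,s)$, i.e.\ the fibre Hessian contracted with the tautological section, which we may write $\F^2 L\bullet\Id_{\T Q\times\R}$. On the other hand, almost-regularity makes the energy projectable, $E_L = H\circ\FL$, so the chain-rule identity $\F(g\circ\FL)=\F^2 L\bullet\gamma_g$ with $g=H$ gives $\F E_L=\F^2 L\bullet\gamma_H$. Equating the two expressions yields $\F^2 L\bullet(\Id_{\T Q\times\R}-\gamma_H)=0$; in coordinates this is just $W_{ij}(v^j-\gamma_H^j)=0$, so $\Id_{\T Q\times\R}-\gamma_H$ takes fibre values in $\Ker\F^2 L=\Ker W$ at every point.

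It then remains to expand along a frame of the kernel. By Proposition~\ref{Ker_TFL} the fibre parts $\gamma_\mu^i=\FL^*(\partial\phi_\mu/\partial p_i)$ of the $\gamma_{\phi_\mu}$ form a local basis of $\Ker W$; since $\Id_{\T Q\times\R}-\gamma_H$ lies in this subbundle, there are unique functions $\lambda^\mu\in\Cinfty(\T Q\times\R)$ with $\Id_{\T Q\times\R}-\gamma_H=\sum_\mu\lambda^\mu\gamma_{\phi_\mu}$, which is precisely \eqref{resolution_identity}. Uniqueness is immediate from the linear independence of the frame.

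The one point requiring care---rather than any genuine obstacle---is that $\gamma_H$ depends a priori on the chosen extension $H$ of the intrinsic Hamiltonian $H_0$ on $P_0$, whereas $E_L=H\circ\FL$ pins down $H$ only along $P_0=\FL(\T Q\times\R)$. The identity $\F E_L=\F^2 L\bullet\gamma_H$ nevertheless holds for any such extension, because $\FL$ takes values in $P_0$; two extensions differ by a function vanishing on $P_0$, so their associated $\gamma_H$ differ by a section of $\Ker\F^2 L$ (by the remark following Proposition~\ref{Ker_TFL}), and this ambiguity is simply reabsorbed into the $\lambda^\mu$. Finally, it is the constant-rank (almost-regularity) hypothesis that keeps $\dim\Ker\F^2 L$ locally constant, guaranteeing that the $\gamma_{\phi_\mu}$ form a frame and that the decomposition exists.
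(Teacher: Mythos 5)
Your proof is correct and follows essentially the same route as the paper: compute $\F E_L$ both via the Liouville identity \eqref{FEL_eq} and via the chain rule applied to $E_L = H\circ\FL$, conclude that $\Id_{\T Q\times\R}-\gamma_H$ takes values in $\Ker\F^2L$, and expand it in the frame $\{\gamma_{\phi_\mu}\}$ provided by Proposition~\ref{Ker_TFL}. Your added remarks---that the choice of extension of $H_0$ off $P_0$ only shifts $\gamma_H$ by a section of $\Ker\F^2L$ (reabsorbed into the $\lambda^\mu$), and that almost-regularity keeps the kernel of constant rank---are correct refinements that the paper leaves implicit.
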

The proof of this result comes from applying the chain rule to the definition of $H$, i.e.\ $E_L = H\circ \FL$, which yields
$$\F E_L (v_q,s) = \F^2 L (v_q ,s )\cdot\gamma_H(v_q,s)\,. $$
Applying \eqref{FEL_eq}, we obtain
$$\F^2 L (v_q ,s )\cdot((v_q,s)-\gamma_H(v_q,s))=0\,,$$
which proves that $(v_q,s)-\gamma_H(v_q,s) = \sum_{\mu} \lambda^{\mu}(v_q,s)\gamma_\mu(v_q,s)$, as we wanted to see.

If we apply the natural bijection between fibre bundle maps and vertical vector fields to~\eqref{resolution_identity}, we obtain the following formula for the Liouville vector field
\begin{equation}\label{resolution_liouville}
    \Delta_{\Tan Q \times \R} = \Gamma_H + \sum_{\mu} \lambda^\mu \Gamma_{\mu}\,.
\end{equation}
Note that, in coordinates, each component of \eqref{resolution_liouville} yields
$$v^i = \gamma_H ^i + \sum_{\mu}\lambda^\mu \gamma_{{\mu}}^i\,,$$
where, in local coordinates, $\gamma_H^i =\FL^*\left(\parder{H}{p_i}\right) $ and  $\gamma_{\mu}^i = \FL^*\left(\parder{\phi_\mu}{p_i}\right)$.
Applying the vector fields $\Gamma_{\nu} = \gamma_{\nu}^i \parder{}{v^i}$ to this last expression, we obtain
$$\gamma_{\nu}^i = \sum_{\mu=1}^{m} (\Gamma_{\nu}\cdot\lambda^{\mu})\gamma_{\mu}^i\,,$$
and, by the linear independency of the $\gamma_{\mu}$, this implies that
\begin{equation} \label{Gamma_vf_condition}
	\Gamma_{\nu} \cdot \lambda^{\mu} = \delta_{\nu}^{\mu}   \,.
\end{equation}
In particular, this shows that the functions $\lambda^
\mu$ are not $\F L$-projectable. As we will show later in the examples, these functions correspond to the velocities which cannot be written in terms of momenta via the Legendre map.

%%%%%%%%%%%%%%%%%%%%%%%%%%%%%%%%%%%%%%%%%%%%%%%%%%%%%%%%%%%%%%%%
%%%%%%%%%%%%%%%%%%%%%%%%%%%%%%%%%%%%%%%%%%%%%%%%%%%%%%%%%%%%%%%%
\section{The evolution operator in contact mechanics}
\label{sec:contact_K_operator}
%%%%%%%%%%%%%%%%%%%%%%%%%%%%%%%%%%%%%%%%%%%%%%%%%%%%%%%%%%%%%%%%
%%%%%%%%%%%%%%%%%%%%%%%%%%%%%%%%%%%%%%%%%%%%%%%%%%%%%%%%%%%%%%%%

In this section, we present two equivalent intrinsic characterizations of the evolution operator~$K$ for Lagrangian systems in the context of dissipative mechanics, along with some of its most relevant properties.

%%%%%%%%%%%%%%%%%%%%%%%%%%%%%%%%%%%%%%%%%%%%%%%%%%%%%%%%%%%%%%%%
\subsection{Construction of the evolution operator \texorpdfstring{$K$}{}}
%%%%%%%%%%%%%%%%%%%%%%%%%%%%%%%%%%%%%%%%%%%%%%%%%%%%%%%%%%%%%%%%

To define the evolution operator in the setting of contact mechanics, we follow the same structure we presented in Section~\ref{Kopsymp}.
We define the evolution operator as the unique vector field along the Legendre map that satisfies three key conditions: the second-order condition and two dynamical conditions.

The condition of $K$ being a vector field along the Legendre map is equivalent to the diagram
\[\begin{tikzcd}[column sep=large, row sep=large]
	& {\Tan (\Tan ^*Q \times \R)} \\
	{\Tan Q \times \R} & {\Tan^*Q \times \R}
	\arrow["{\tau_{\Tan^*Q \times \R}}", from=1-2, to=2-2]
	\arrow["K", from=2-1, to=1-2]
	\arrow["{\FL }"', from=2-1, to=2-2]
\end{tikzcd}\]
being commutative. In other words, the map $K\colon \Tan Q \times \R \to \Tan (\Tan^*Q \times \R)$ satisfies 
\begin{equation}
    \tau_{\Tan^*Q \times \R} \circ K = \FL \,.
\end{equation}
This equation is sometimes referred to as the structural equation, and it implies that, in local coordinates, $K$ is expressed as
\begin{equation}
    K(q,v,s)=\left( q, \parder{L}{v},s\,;\, a(q,v,s), b(q,v,s),c(q,v,s) \right) \,,
\end{equation}
or, alternatively, as
\begin{equation}
	K(q,v)=a^i(q,v,s) \frac{\partial}{\partial q^i}\bigg|_{\FL(q,v,s)} + 
	b_i(q,v,s) \frac{\partial}{\partial p_i}\bigg|_{\FL(q,v,s)} +  c(q,v,s) \frac{\partial}{\partial s}\bigg|_{\FL(q,v,s)}\,,
\end{equation}
where $a^i, b_i$ and $c$ are functions yet to be determined.

Now, the so-called \textsl{second-order condition} for the time-evolution operator $K$ is given by
\begin{equation} \label{contact_second_order_condition}
    \Tan \pi_0^1 \circ K = \mathrm{Id}_{\Tan Q} \,, 
\end{equation}
where $\pi_0^1 \colon \Tan^*Q \times \R \to Q$ is the canonical projection. Writing this in coordinates, we find that 
\begin{equation}
    \Tan\pi_0^1 (q,p,s,v,u,z) = (q,v) \,,
\end{equation}
and so, 
\begin{equation}
    \Tan \pi_0^1 \circ K = (q,a^i) \,.
\end{equation}
Thus, Equation \eqref{contact_second_order_condition} determines that, in coordinates, we have $a^i = v^i$.
Note that this condition is equivalent to $\rho\circ \Tan \pi_0 \circ K = \Id_{\Tan Q \times \R}$, where $\pi_0 \colon \Tan^*Q \times \R \to Q\times \R$ and $\rho \colon \T (Q \times \R) \to \Tan Q \times \R$ are the natural projections.

The last step to fully characterize the time-evolution operator $K$ is via the so-called \textsl{dynamical conditions}. They are 
\begin{align}
\begin{dcases}
        \FL^*(i_K(\d \eta_Q \circ \FL)) = \d E_L + \parder{L}{s} \eta_L\,,
        \\
    i_K(\eta_Q\circ \FL) = -E_L\,,
\end{dcases}
\end{align}
where $\eta_Q$ is the canonical contact form on $\Tan^* Q \times \R$.

In coordinates, these conditions define the functions $b_i$ and $c$. Let us see this, by writing in local coordinates all the expressions involved. First, we have 
$$i_K(\d \eta_Q \circ \FL) = -b_i \d q^i\big|_{\FL(q,v,s)} + v^i \d p_i\big|_{\FL(q,v,s)} \,,$$
and therefore,
$$\FL^*(i_K(\d \eta_Q \circ \FL)) = 
\left( v^j \parderr{L}{v^j}{q^i} - b_i \right)\d q^i
+ v^i\parderr{L}{v^i}{v^j} \d v^i 
+v^i\parderr{L}{v^i}{s} \d s \,.
$$
On the right-hand side, we have 
\begin{align}
    \d E_L + \parder{L}{s} \eta_L = \left( v^j\parderr{L}{v^j}{q^i} - \parder{L}{q^i}- \parder{L}{s}\parder{L}{v^i} \right) \d q^i + v^i\parderr{L}{v^i}{v^j} \d v^i + v^i \parderr{L}{v^i}{s} \d s  \,,
\end{align}
$$$$
and so, equating both expressions, we obtain
\begin{equation}
    b_i = \parder{L}{q^i} + \parder{L}{s} \parder{L}{v^i} \,.
\end{equation}
Now, as
$$i_K(\eta_Q\circ\FL) = c(q,v,s)-v^i\parder{L}{v^i} \,,$$
the second dynamical condition
shows that $c = L$. 

Thus, we have proved that with these three conditions and the fact that the operator $K$ is a vector field along the Legendre map, the evolution operator $K\colon \Tan Q \times \R \to \Tan(\Tan^* Q \times \R)$ is completely determined. Also, we have seen that its coordinate expression is
\begin{equation}
    K(q,v,s) =
    v^i \left(\parder{}{q^i}\circ\FL\right) 
    +\left( \parder{L}{q^i} + \parder{L}{s} \parder{L}{v^i} \right) \left(\parder{}{p_i}\circ\FL\right) 
    + L \, \left(\parder{}{s}\circ\FL\right)\,.
\end{equation}
This vector field along the Legendre map defines an operator 
$K\colon \Cinfty (\Tan^*Q \times\R) \to \Cinfty(\Tan Q \times \R)$
that takes functions defined in the phase space and gives their time derivative in the velocity space. In local coordinates, this derivation is given by
\begin{equation}
        (K\cdot f)(q,v,s) =
    v^i \FL^*\left(\parder{f}{q^i}\right) 
    +\left( \parder{L}{q^i} + \parder{L}{s} \parder{L}{v^i} \right) \FL^*\left(\parder{f}{p_i}\right) 
    + L \, \FL^*\left(\parder{f}{s}\right)\,.
\end{equation}
Note that we can equivalently write $K\cdot f \in \Cinfty(\Tan Q \times \R)$ as the function defined by 
\begin{equation} \label{contact_K_derivation}
    (K\cdot f) (q,v,s) = i_K(\d f\circ \FL)= \langle \d f (\FL (q,v,s)), K(q,v,s) \rangle \,.
\end{equation}

The evolution operator can also be characterized with only one dynamical condition, using the isomorphism $B\colon \Tan (\Tan ^* Q \times \R) \to \Tan^* (\Tan^* Q \times \R)$ defined by the canonical contact $1$-form on $\Tan^*Q\times \R$ (see Equation \eqref{canis}). 

\begin{proposition}
    The evolution operator $K$ can be equivalently characterized as the unique vector field along the Legendre map such that
    \begin{equation}
        \Tan \pi_0^1 \circ K = \mathrm{Id}_{\Tan Q} \,,
    \end{equation}
    and
    \begin{equation} \label{K_formula_def}
        \FL^*(B \circ K) = \d E_L + \left( \parder{L}{s} - E_L\right) \eta_L\,.
    \end{equation}
\end{proposition}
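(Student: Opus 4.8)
The plan is to show that the single dynamical equation \eqref{K_formula_def} together with the second-order condition \eqref{contact_second_order_condition} pins down exactly the same vector field along $\FL$ as the three conditions used to construct $K$. The key identity is the pointwise definition of the isomorphism $B$ in \eqref{canis}: applied to $K$ along the Legendre map it reads
\begin{equation*}
    B\circ K = i_K(\d\eta_Q\circ\FL) + \big(i_K(\eta_Q\circ\FL)\big)\,(\eta_Q\circ\FL)\,.
\end{equation*}
Since $i_K(\eta_Q\circ\FL)$ is a function on $\Tan Q\times\R$ and $\FL^*\eta_Q=\eta_L$, pulling back along $\FL$ gives
\begin{equation*}
    \FL^*(B\circ K) = \FL^*\big(i_K(\d\eta_Q\circ\FL)\big) + \big(i_K(\eta_Q\circ\FL)\big)\,\eta_L\,,
\end{equation*}
which is the bridge between the two formulations.

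For the forward implication I would substitute the two dynamical conditions already imposed on $K$, namely $\FL^*\big(i_K(\d\eta_Q\circ\FL)\big)=\d E_L+\tparder{L}{s}\,\eta_L$ and $i_K(\eta_Q\circ\FL)=-E_L$, into the previous display. This yields
\begin{equation*}
    \FL^*(B\circ K) = \d E_L + \parder{L}{s}\eta_L - E_L\,\eta_L = \d E_L + \Big(\parder{L}{s}-E_L\Big)\eta_L\,,
\end{equation*}
which is exactly \eqref{K_formula_def}. Hence the evolution operator constructed before satisfies the new characterization.

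For the converse and the uniqueness, I would show that \eqref{contact_second_order_condition} and \eqref{K_formula_def} determine $K$ completely. Let $K$ and $K'$ be two vector fields along $\FL$ satisfying both, and set $D=K-K'$, a section of $\FL^*\Tan(\Tan^*Q\times\R)$. Linearity of $\Tan\pi_0^1$ and \eqref{contact_second_order_condition} force $\Tan\pi_0^1\circ D=0$, so in natural coordinates the $\tparder{}{q^i}$-component of $D$ vanishes; writing $D=\delta b_i\,\tparder{}{p_i}+\delta c\,\tparder{}{s}$ at $\FL(q,v,s)$ and using \eqref{canis} one computes
\begin{equation*}
    \FL^*(B\circ D) = \Big(-\delta b_i - \delta c\,\parder{L}{v^i}\Big)\d q^i + \delta c\,\d s\,.
\end{equation*}
Since \eqref{K_formula_def} gives $\FL^*(B\circ D)=0$ and $\d q^i,\d s$ are linearly independent on $\Tan Q\times\R$, the $\d s$-coefficient yields $\delta c=0$ and then the $\d q^i$-coefficients yield $\delta b_i=0$, so $D=0$. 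Combined with the forward step, the unique solution of \eqref{contact_second_order_condition}–\eqref{K_formula_def} is the previously defined $K$.

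The step I expect to require the most care is the uniqueness argument in the singular case, where $\FL^*$ may fail to be injective because $\FL$ is not a local diffeomorphism. The point to emphasize is that the second-order condition removes precisely the dangerous freedom: it forces the $\tparder{}{q^i}$-component $\delta a^i$ of $D$ to vanish, and that component is the only one which $\FL^*(B\circ D)$ would detect solely through the Hessian $W$ (it enters the $\d v^j$-coefficient as $\delta a^i W_{ij}$), hence the only one a nonzero element of $\Ker W$ could hide. Once $\delta a^i=0$, the pullback involves only the coordinate differentials $\d q^i$ and $\d s$, which descend faithfully from the base $Q\times\R$ and so detect $\delta b_i$ and $\delta c$. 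This is why \eqref{contact_second_order_condition} is indispensable for uniqueness when $L$ is singular: without it one could add to $K$ any lift corresponding to $\Ker W$.
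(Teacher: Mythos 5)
Your proposal is correct, but it is organized differently from the paper's proof, and the difference is worth noting. The paper proves the proposition by a single brute-force coordinate computation: it imposes $a^i=v^i$ from the second-order condition, expands both sides of \eqref{K_formula_def} in natural coordinates with the undetermined coefficients $b_i$, $c$, and equates them, recovering the known local expression of $K$ — existence and uniqueness in one stroke, but without explaining why the particular right-hand side $\d E_L + \bigl(\tparder{L}{s}-E_L\bigr)\eta_L$ arises. You instead split the argument in two. First, a coordinate-free forward implication: the pointwise definition \eqref{canis} of $B$ together with $\FL^*(\eta_Q\circ\FL)=\eta_L$ and the $\Cinfty(\Tan Q\times\R)$-linearity of the pullback shows that \eqref{K_formula_def} is literally the $B$-combination of the two original dynamical conditions, $\FL^*\bigl(i_K(\d\eta_Q\circ\FL)\bigr)=\d E_L+\tparder{L}{s}\,\eta_L$ and $i_K(\eta_Q\circ\FL)=-E_L$; this makes the equivalence of the one-equation and two-equation formulations transparent rather than a computational coincidence. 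Second, a clean uniqueness argument via the difference $D=K-K'$: the second-order condition kills the $\tparder{}{q^i}$-component, and then $\FL^*(B\circ D)=0$ forces $\delta c=0$ and $\delta b_i=0$ because only $\d q^i$ and $\d s$ survive the pullback. Your closing remark is also substantive and correct: without the second-order condition, any $\delta a^i\in\Ker W$ can be compensated by suitable $\delta b_i$, $\delta c$ (the mixed-partial terms $\delta a^i\,\tparderr{L}{q^j}{v^i}$ and $\delta a^i\,\tparderr{L}{s}{v^i}$ are absorbed), so uniqueness genuinely fails for singular $L$; the paper's proof does not isolate this role of the {\sc sode} condition. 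The trade-off: the paper's computation is shorter and simultaneously re-derives the local expression of $K$, while your route buys conceptual clarity about where \eqref{K_formula_def} comes from and exactly what the second-order condition is needed for.
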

\begin{proof}
    We can prove this in coordinates in a similar way as above.
    The first equation is the second-order condition, and as we saw above, it implies~$a^i=v^i$.
    For the second equation, on the left-hand side we have
    \begin{align}
    \FL^*(B\circ K ) &= \,^t\Tan \FL \circ (i_K(\d\eta_Q\circ \FL) + (i_K(\eta_Q \circ \FL))(\eta_Q\circ \FL))=\\
    &\ \quad\left( v^j \parderr{L}{v^j}{q^i} - b_i - c\parder{L}{v^i}+ v^j\parder{L}{v^j}\parder{L}{v^i}\right)\d q^i\\
    &\ \quad + v^i\parderr{L}{v^i}{v^j}  \d v^i  
+\left( v^i \left( \parderr{L}{v^i}{s} - \parder{L}{v^i} \right) +c\right) \d s \,,
    \end{align}
    and on the right-hand side,
    \begin{align}
            \d E_L + \left(\parder{L}{s} - E_L\right) \eta_L &= 
            \left( v^j\parderr{L}{v^j}{q^i} - \parder{L}{q^i}- \parder{L}{s}\parder{L}{v^i} + v^j\parder{L}{v^j}\parder{L}{v^i} -L\parder{L}{v^i} \right) \d q^i\\ &\ \quad + v^i\parderr{L}{v^i}{v^j} \d v^i +\left( v^i \left( \parderr{L}{v^i}{s} - \parder{L}{v^i} \right)  + L \right) \d s  \,,
    \end{align}
    equating both sides directly yields the local expression of the evolution operator.
\end{proof}

We can also give a construction of the evolution operator $K$ in terms of the so-called \textsl{Tulczyjew's triples}. 
We can use the canonical diffeomorphism
$\chi\colon \Tan^*\Tan Q \longrightarrow \Tan\Tan^* Q$ (see \cite{Tul_1976b})
to define a diffeomorphism between 
$\Tan^* (\Tan Q \times \R)$ and 
$\T (\T^* Q \times \R)$, 
using the natural identification of 
$\Tan (M \times M')$ with $\T M \times \T M'$ 
and of both 
$\Tan \R$ and $\Tan^* \R$ with $\R \times \R$ 
(using the fixed canonical coordinate $s$). 
In coordinates, it reads
\begin{align}
    \widetilde \chi\colon \Tan^*(\Tan Q\times \R) &\longrightarrow \Tan(\Tan^* Q\times \R)\,,
    \\
    (q,v,s,u,p,z)&\longmapsto (q,p,s,v,u,z) \,, 
\end{align}
and then
$$K = \widetilde\chi\circ \left(\d L + L\d s -\parder{L}{s}\eta_L\right)\,.$$

%%%%%%%%%%%%%%%%%%%%%%%%%%%%%%%%%%%%%%%%%%%%%%%%%%%%%%%%%%%%%%%%
\subsection{The evolution operator and the equations of motion}
%%%%%%%%%%%%%%%%%%%%%%%%%%%%%%%%%%%%%%%%%%%%%%%%%%%%%%%%%%%%%%%%

This section studies how the Lagrangian and Hamiltonian formalisms can be recovered from the time-evolution operator.

\begin{proposition}
    Let $\xi \colon I\to\Tan Q \times \R$ be a path, and $\dot\xi \colon I \to \Tan(\Tan Q\times \R)$ its canonical lift. Then, $\xi$ is a solution to the Herglotz--Euler--Lagrange equations for a given Lagrangian $L$ if, and only if,
    \begin{equation} \label{contact_K_XL_relation}
        \Tan(\FL) \circ \dot\xi = K\circ \xi\,.
    \end{equation}
\end{proposition}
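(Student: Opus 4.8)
The plan is to prove both implications by working in local coordinates, exploiting the explicit coordinate expression of $K$ derived earlier and the structural/second-order conditions that $K$ satisfies. Write $\xi(t) = (q^i(t), v^i(t), s(t))$ for the path in $\Tan Q \times \R$, so that its canonical lift is $\dot\xi(t) = (q^i, v^i, s; \dot q^i, \dot v^i, \dot s)$. The first task is to compute both sides of~\eqref{contact_K_XL_relation} explicitly. For the left-hand side, recall that $\FL(q,v,s) = (q^i, \tparder{L}{v^i}, s)$, so $\Tan(\FL)\circ\dot\xi$ is obtained by differentiating this along $\xi$; its base point is $\FL(\xi(t)) = (q^i, \tparder{L}{v^i}, s)$ and its fibre components in $\Tan(\Tan^*Q\times\R)$ are $(\dot q^i,\ \totder{}{t}(\tparder{L}{v^i}),\ \dot s)$.

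For the right-hand side, I would simply substitute $\xi(t)$ into the coordinate formula for $K$ already established, namely that $K(q,v,s)$ has base point $\FL(q,v,s)$ and fibre components
\begin{equation*}
\Big(v^i,\ \parder{L}{q^i} + \parder{L}{s}\parder{L}{v^i},\ L\Big)\,.
\end{equation*}
The two vectors along $\FL$ agree if and only if their base points coincide (both equal $\FL(\xi(t))$, which holds automatically) and their three groups of fibre components match. Equating the $\tparder{}{q^i}$-components gives $\dot q^i = v^i$, which is precisely the second-order (holonomy) condition. Equating the $\tparder{}{s}$-component gives $\dot s = L$. Finally, equating the $\tparder{}{p_i}$-components gives $\totder{}{t}(\tparder{L}{v^i}) = \tparder{L}{q^i} + \tparder{L}{s}\tparder{L}{v^i}$. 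Together with $\dot q^i = v^i$, these are exactly the Herglotz--Euler--Lagrange equations in the form recorded earlier (the Herglotz--Dirac system~\eqref{Herglotz--Dirac_eq} read along a holonomic path), so the equivalence follows.

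The main subtlety, rather than a genuine obstacle, is bookkeeping: one must be careful that $\totder{}{t}(\tparder{L}{v^i})$ unpacks via the chain rule into $\parderr{L}{q^j}{v^i}\dot q^j + \parderr{L}{v^j}{v^i}\dot v^j + \parderr{L}{s}{v^i}\dot s$, and to confirm that matching the $\tparder{p_i}$-components yields the Herglotz--Euler--Lagrange equation without presupposing the second-order condition that the $\tparder{q^i}$-components supply. Since both conditions emerge simultaneously from the single vector equation~\eqref{contact_K_XL_relation}, the argument is clean: the holonomy constraint $\dot q^i = v^i$ is not an extra hypothesis but is forced by the $\tparder{q^i}$-components, exactly as in the symplectic case reviewed in Section~\ref{Kopsymp}. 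I would close by noting that the base-point agreement encodes the momentum relation $p_i = \tparder{L}{v^i}$ implicitly, so no separate verification of the Legendre constraint is needed; it is built into the fact that $K$ is a vector field along $\FL$.
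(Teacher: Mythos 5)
Your proof is correct and follows essentially the same route as the paper's: both compute $\Tan(\FL)\circ\dot\xi$ and $K\circ\xi$ in local coordinates, equate fibre components, and read off the second-order condition $\dot q^i = v^i$, the equation $\dot s = L$, and the Herglotz--Euler--Lagrange equations simultaneously. Your extra remarks on the chain-rule expansion of $\totder{}{t}\bigl(\tparder{L}{v^i}\bigr)$ and on the base-point agreement encoding $p_i = \tparder{L}{v^i}$ are consistent with, and slightly more explicit than, the paper's presentation.
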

\begin{proof}
    It is enough to see this in local canonical coordinates. Assume that the path is given by $\xi = (q,v,s)$. 
    Its canonical lift is
    $\dot\xi = (q,v,s ; \dot q, \dot v, \dot s)$.
    Thus, we have
    \begin{equation}
        \Tan(\FL) \circ \dot \xi = \left( q, \parder{L}{v} , s\,;\, \dot q, \dot q \parderr{L}{v}{q} + \dot v \parderr{L}{v}{v} + \dot s\parderr{L}{v}{s}, \dot s \right) \,,
    \end{equation}
    and
    \begin{equation}
        K\circ \xi = \left( q,\parder{L}{v},s\,;\, v, \parder{L}{q} + \parder{L}{s} \parder{L}{v}, L  \right) \,.
    \end{equation}
    Equating both expressions,
    \begin{equation}
        \begin{dcases}
            \dot q = v \,,
            \\
            v \parderr{L}{v}{q} + \dot v \parderr{L}{v}{v} + \dot s\parderr{L}{v}{s} = \parder{L}{q} + \parder{L}{s} \parder{L}{v} \,,
            \\
            \dot s = L \,,
        \end{dcases}
    \end{equation}
    which are precisely the second-order condition and the Herglotz--Euler--Lagrange equations.
\end{proof}
Note that, if a path $\xi \colon I\to \Tan Q\times \R$ satisfies
$\Tan(\FL) \circ \dot \xi = K\circ \xi$
then, necessarily, it can be obtained as the prolongation of a path $\zeta \colon I \to Q\times \R$. Therefore, the equations of motion defined by the evolution operator $K$ incorporate the second-order condition, regardless of the regularity of the Lagrangian function.

A solution of the Herglotz--Euler--Lagrange equations satisfies $\dot{\xi}= X_L\circ\xi$, with $X_L$ a second-order Lagrangian vector field defined on an appropriate submanifold of $\Tan Q \times \R$. Thus, an immediate consequence of the last proposition is that we can write
\begin{equation}
    \Tan(\FL) \circ X_L \circ \xi = K\circ \xi\,.
\end{equation}
If $S_f$ is the final constraint submanifold, since we have solutions at every point, we can write
\begin{equation}
    \restr{K}{S_f} = \Tan(\FL) \circ \restr{X_L}{S_f} \,.
\end{equation}
If the Lagrangian is regular, then the Legendre transformation is a local diffeomorphism, and the Lagrangian vector field $X_L$ is uniquely determined. Therefore, in the regular case, we obtain
\begin{equation}
    X_L=\Tan(\FL^{-1}) \circ K \,.
\end{equation}

\begin{proposition}\label{H-Dirac}
    Let $\psi\colon I\to \Tan^*Q \times \R$ be a path in the extended cotangent bundle and consider its canonical lift $\dot \psi \colon I\to \Tan(\Tan^*Q \times \R)$. Then $\psi$ is a solution to the Herglotz--Dirac equations~\eqref{Herglotz--Dirac_eq} if, and only if,
    \begin{equation}
        \dot \psi =  K \circ \rho \circ \Tan(\pi_0) \circ \dot \psi \,,
    \end{equation}
    where $\pi_0 \colon \Tan^* Q \times \R \to Q \times \R$ and $\rho\colon \Tan(Q\times \R) \to \Tan Q \times \R$ are the canonical projections.
\end{proposition}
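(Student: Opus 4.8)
The plan is to verify the equivalence in local canonical coordinates, mirroring the structure of the proof of the previous proposition. This is the natural approach because the evolution operator $K$ has a fully explicit coordinate expression, and the Herglotz--Dirac equations~\eqref{Herglotz--Dirac_eq} are already stated in coordinates. The composite $\rho \circ \Tan(\pi_0) \circ \dot\psi$ is designed precisely to strip a tangent vector along $\psi$ down to the point of $\Tan Q \times \R$ at which $K$ should be evaluated, so the computation should close.

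First I would write the path as $\psi = (q,p,s)$, so that its canonical lift is $\dot\psi = (q,p,s\,;\,\dot q,\dot p,\dot s)$. Applying $\Tan(\pi_0)$ projects onto $\Tan(Q\times\R)$, giving $(q,s\,;\,\dot q,\dot s)$, and then $\rho$ identifies this with the point $(q,\dot q,s)\in \Tan Q\times\R$. Thus the argument fed into $K$ is $(q,\dot q, s)$, and using the coordinate expression of the evolution operator I would compute
\begin{equation}
    K\circ\rho\circ\Tan(\pi_0)\circ\dot\psi = \left( q,\parder{L}{v}(q,\dot q,s),s\,;\,\dot q,\ \parder{L}{q}+\parder{L}{s}\parder{L}{v},\ L \right)\,,
\end{equation}
where the partial derivatives of $L$ are evaluated at $(q,\dot q,s)$. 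Equating this with $\dot\psi = (q,p,s\,;\,\dot q,\dot p,\dot s)$ componentwise, the base-point equality in the second slot forces $p = \parder{L}{v}$, the fibre components give $\dot p = \parder{L}{q}+\parder{L}{s}\parder{L}{v}$ and $\dot s = L$, while the equalities in the $q$-slot and first velocity slot are automatically consistent. These are exactly the three Herglotz--Dirac equations~\eqref{Herglotz--Dirac_eq}, so the two conditions are equivalent.

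The one subtlety worth flagging—rather than a genuine obstacle—is bookkeeping with the identifications. The structural equation $\tau_{\Tan^*Q\times\R}\circ K = \FL$ means that $K\circ\rho\circ\Tan(\pi_0)\circ\dot\psi$ is a vector based at $\FL(q,\dot q,s)=(q,\parder{L}{v},s)$, whereas $\dot\psi$ is based at $\psi(t)=(q,p,s)$; so the equation $\dot\psi = K\circ\rho\circ\Tan(\pi_0)\circ\dot\psi$ encodes both the base-point constraint $p=\parder{L}{v}$ and the matching of fibre components, and one must read off both. I would therefore emphasize that the equality of the two expressions already builds in the primary constraint $p=\partial L/\partial v$ as the statement that the base points agree, which is precisely the first Herglotz--Dirac equation. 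No regularity hypothesis on $L$ is needed, since the verification is purely a coordinate identity; this parallels the remark following the previous proposition that the second-order condition is incorporated automatically.
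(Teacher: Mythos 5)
Your proposal is correct and takes essentially the same route as the paper: an explicit coordinate computation of $K\circ\rho\circ\Tan(\pi_0)\circ\dot\psi$, compared componentwise with $\dot\psi$, yielding exactly the three Herglotz--Dirac equations. Your extra remark that the base-point matching encodes the primary constraint $p=\partial L/\partial v$ is a useful clarification of a point the paper's proof leaves implicit, but it does not change the argument.
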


\begin{proof}
    It is enough to explicitly write both sides of the equation in coordinates. If $\psi=(q,p,s)$, then its canonical lift is expressed as $\dot\psi=(q,p,s;\dot q, \dot p,\dot s)$. Therefore,
    \begin{align}
        K\circ \rho \circ \Tan(\pi_0) \circ \dot\psi = \left( q, \parder{L}{v},s\, ; \, \dot q,\parder{L}{q}+\parder{L}{v}\parder{L}{s},L  \right)\,.
    \end{align}
    If we compare both expressions, we obtain
    \begin{equation}
    \begin{dcases}
    \dot s = L\,,
    \\
     p=\parder{L}{v}\,,
    \\
    \dot p = \parder{L}{q} + \parder{L}{v} \parder{L}{s} \,,
    \end{dcases}
    \end{equation}
    which are precisely the Herglotz--Dirac equations~\ref{Herglotz--Dirac_eq} for the Lagrangian $L$.
\end{proof}

Since $K$ is an embedding, there is a correspondence between the solutions of the Herglotz--Euler--Lagrange and the Herglotz--Dirac equations. The map $\xi \mapsto \FL \circ\xi$ sends solutions to solutions, and its inverse is $\psi \mapsto \rho \circ \Tan(\pi_0)\circ \dot\psi$.
Thus, if there exists a solution to the Herglotz--Dirac equations $\psi \colon I \to \Tan^*Q \times \R$, it can be expressed as 
$\psi = \FL \circ \xi$, where $\xi \colon I \to \Tan Q\times \R$ is a solution of the Herglotz--Euler--Lagrange equations, and is obtained as the prolongation of the projected path $\pi_0 \circ \psi$.
By Propositions \ref{contact_K_XL_relation} and \ref{H-Dirac}, we have
\begin{equation}\label{contact_K_relation_1}
    \dot\psi = \Tan (\FL) \circ \dot{\xi} = K\circ \xi = K \circ \rho \circ \Tan (\pi_0) \circ \dot \psi. 
\end{equation}

Now, suppose that we have a Hamiltonian vector field $X_H$, defined on the image of the Legendre map $\FL(\Tan Q\times \R) \subseteq \Tan^*Q\times \R$, 
for the Hamiltonian formalism associated with the Lagrangian system $(\Tan Q \times \R, L)$. Then, the solutions to Hamilton's equations are its integral curves 
$\dot\psi=X_H \circ \psi$, and we can write
\begin{equation} \label{contact_K_relation_2}
    K\circ \xi=\dot\psi = X_H \circ \psi = X_H \circ \FL \circ \xi\,.
\end{equation}
In the final constraint submanifold $S_f$, this gives the relation
\begin{equation}
    \restr{K}{S_f} = X_H\circ \restr{\FL}{S_f} \,.
\end{equation}
And, if the Lagrangian is regular, we have
$$X_H = K\circ \FL^{-1}\,.$$

Lastly, assume that we have two $\FL$-related solutions $\xi$ and $\psi$ of the Herglotz--Euler--Lagrange and the Herglotz--Dirac equations, respectively, i.e.\ $\psi = \FL \circ {\xi}$
and $\xi = \rho\circ\Tan(\pi_0) \circ \dot\psi$. Then, for a given function $f\in\Cinfty(\Tan^* Q\times \R)$,
\begin{equation}
    \frac{\d}{\d t}(f\circ \psi) = \langle \d f \circ \psi, \dot\psi \rangle = \langle \d f \circ (\FL \circ  \xi), K\circ {\xi} \rangle = (K\cdot f ) \circ \xi \,,
\end{equation}
where we used Equations \eqref{contact_K_derivation} and \eqref{contact_K_relation_2}.

\begin{corollary}\label{K_contact_constraints}
Suppose that \( f \in \Cinfty(\Tan^*Q \times \mathbb{R}) \) is a Hamiltonian constant of motion, such as a Hamiltonian constraint. Then $K\cdot f$ is a Lagrangian constraint.
\end{corollary}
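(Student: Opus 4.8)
The plan is to deduce the statement directly from the identity established immediately before it, namely
$$
(K \cdot f) \circ \xi = \frac{\d}{\d t}(f \circ \psi),
$$
valid whenever $\xi \colon I \to \Tan Q \times \R$ solves the Herglotz--Euler--Lagrange equations and $\psi = \FL \circ \xi$ is the $\FL$-related solution of the Herglotz--Dirac equations. This formula is the whole engine of the argument: it expresses the action of the evolution operator on a phase-space function as the genuine time-derivative of that function along the associated Hamiltonian trajectory, so the corollary will follow as soon as we understand its right-hand side for a constant of motion.

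First I would spell out the hypothesis on $f$. Saying that $f$ is a Hamiltonian constant of motion means that $f \circ \psi$ is constant in $t$ along every Hamiltonian solution $\psi$, so that $\frac{\d}{\d t}(f \circ \psi) = 0$. A Hamiltonian constraint is the special case we care about: every constraint vanishes on the final constraint submanifold, and the dynamics is tangent to it, so $f \circ \psi \equiv 0$ and in particular its derivative is zero. Substituting into the displayed identity then gives $(K \cdot f) \circ \xi = 0$ for every solution $\xi$ of the Herglotz--Euler--Lagrange equations.

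Finally I would translate this into the claimed conclusion. Since Lagrangian solutions exist at every point of the final Lagrangian constraint submanifold, the vanishing of $(K \cdot f) \circ \xi$ for all such $\xi$ says precisely that the function $K \cdot f \in \Cinfty(\Tan Q \times \R)$ vanishes on that submanifold; by definition this is exactly what it means for $K \cdot f$ to be a Lagrangian constraint. The only point that requires genuine care is this last identification: one must check that ``vanishing along all Lagrangian solutions'' is indeed the defining property of a Lagrangian constraint on the relevant submanifold, and that solutions are available at each of its points, so that the pointwise conclusion is legitimate. Everything else is a one-line substitution into an identity already proved; alternatively, one could read off the vanishing directly from the local expression of $K \cdot f$, but this adds nothing.
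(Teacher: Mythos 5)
Your proposal is correct and follows exactly the route the paper intends: the corollary is stated as an immediate consequence of the identity $\frac{\d}{\d t}(f\circ \psi) = (K\cdot f)\circ \xi$ derived just before it, and you substitute the vanishing time derivative of a constant of motion (or of a constraint, which vanishes identically along Hamiltonian solutions) to conclude that $K\cdot f$ vanishes along all Lagrangian solutions, hence on the final constraint submanifold. The only difference is that you make explicit the two points the paper leaves tacit --- that solutions exist at every point of the final constraint submanifold, and that vanishing there is the defining property of a Lagrangian constraint --- which is a faithful filling-in of detail rather than a different argument.
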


%%%%%%%%%%%%%%%%%%%%%%%%%%%%%%%%%%%%%%%%%%%%%%%%%%%%%%%%%%%%%%%%
\subsection{Relating the constraint algorithms}
%%%%%%%%%%%%%%%%%%%%%%%%%%%%%%%%%%%%%%%%%%%%%%%%%%%%%%%%%%%%%%%%

Let $L\in \Cinfty (\T Q \times \R)$ be an almost-regular Lagrangian function, so we can apply the results presented in Section~\ref{sec:almost_regular}.
Recall that $P_0 \coloneqq \FL(\Tan Q\times \R)$ is a closed submanifold, (locally) described by the vanishing of $m$ functions $\{\phi_{\mu}\}_{\mu=1,\dots,m}$, with linearly independent differentials $\d \phi_{\mu}$ at every point of $P_0$. Also, the vertical vector fields $\Gamma_{\mu}\in\X(\Tan Q\times \R)$, locally given by
$$\Gamma_{\mu} = \gamma_{\mu}\parder{}{v}=\FL^*\left( \parder{\phi_{\mu}}{p} \right)\parder{}{v}\,,$$
provide a frame for the vector subbundle $\Ker \Tan(\FL)$. 

Since the Lagrangian is almost-regular, there exists a unique function $H_0 \in \Cinfty(P_0)$ such that $E_L = H_0\circ\FL$ on $P_0$. 
The function $H_0$ can be extended to a function $H\in\Cinfty(\Tan^*Q\times \R)$ defined on the whole extended phase space $\Tan^*Q \times \R$. This function $H$ satisfies $\FL^*(H)=E_L$ and defines a (unique) contact Hamiltonian vector field $X_H$.

Our aim is to write the evolution operator $K$ in terms of the contact Hamiltonian vector fields associated with a choice of Hamiltonian function and a set of primary Hamiltonian constraints. 
\begin{proposition} \label{K_XH_expression}
	Let $H\in \Cinfty (\Tan^*Q)$ be a fixed Hamiltonian function such that $\FL^*(H)=E_L$, and $\{\phi_{\mu}\}_{\mu}$ be a set of primary constraint Hamiltonian functions. Then, there locally exist $m$ uniquely determined functions $\lambda^{\mu}$ such that
	\begin{equation}
		K = X_H \circ \FL + \sum_{\mu= 1}^m \lambda^{\mu} (X_{\phi_\mu} \circ \FL)\,.
	\end{equation}
    These functions $\lambda^{\mu}$ are the same as those appearing in Proposition~\ref{prop_resolution_identity}.
\end{proposition}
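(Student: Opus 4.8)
The plan is to avoid recomputing $K$ from its local expression and instead verify that the proposed right-hand side satisfies the three conditions characterizing the evolution operator --- the second-order condition together with the two dynamical conditions --- and then invoke the uniqueness of $K$ established in Section~\ref{sec:contact_K_operator}. Write $V \coloneqq X_H \circ \FL + \sum_{\mu} \lambda^{\mu}\,(X_{\phi_\mu}\circ\FL)$, with the $\lambda^{\mu}$ taken to be the functions of Proposition~\ref{prop_resolution_identity}. Since $X_H$ and each $X_{\phi_\mu}$ are vector fields on $\Tan^*Q\times\R$, the compositions $X_H\circ\FL$ and $X_{\phi_\mu}\circ\FL$ lie in $\vf(\FL)$, and a $\Cinfty(\Tan Q\times\R)$-linear combination of such is again a vector field along $\FL$; hence $V\in\vf(\FL)$ and it remains to check the defining conditions.

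The two easy conditions come for free from the resolution of the identity. Because $\Tan\pi_0^1$ retains only the base point and the $\partial/\partial q^i$-component of a tangent vector, and the $\partial/\partial q^i$-component of $X_g$ at $\FL(q,v,s)$ is $\FL^*(\partial g/\partial p_i)=\gamma_g^i$, the $\partial/\partial q^i$-component of $V$ is $\gamma_H^i+\sum_{\mu}\lambda^{\mu}\gamma_{\phi_\mu}^i$, which equals $v^i$ by the coordinate form of Equation~\eqref{resolution_liouville}; thus $\Tan\pi_0^1\circ V=\Id_{\Tan Q}$. For the second dynamical condition, use $i_{X_H}\eta_Q=-H$ and $i_{X_{\phi_\mu}}\eta_Q=-\phi_\mu$ together with $\FL^*(H)=E_L$ and $\phi_\mu\circ\FL=0$ to obtain $i_V(\eta_Q\circ\FL)=-E_L+\sum_{\mu}\lambda^{\mu}\,(-\phi_\mu\circ\FL)=-E_L$.

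The substantive step is the first dynamical condition $\FL^*(i_V(\d\eta_Q\circ\FL))=\d E_L+(\partial L/\partial s)\eta_L$. Contracting with $V$ and pulling back, the $\Cinfty(\Tan Q\times\R)$-linearity of $\FL^*$ in the scalars $\lambda^{\mu}$ gives $\FL^*(i_V(\d\eta_Q\circ\FL))=\FL^*(i_{X_H}\d\eta_Q)+\sum_{\mu}\lambda^{\mu}\FL^*(i_{X_{\phi_\mu}}\d\eta_Q)$. I would then insert the contact Hamiltonian identities $i_{X_H}\d\eta_Q=\d H-(\Lie_R H)\eta_Q$ and $i_{X_{\phi_\mu}}\d\eta_Q=\d\phi_\mu-(\Lie_R\phi_\mu)\eta_Q$, noting that $R=\partial/\partial s$ in the Darboux coordinates of $\eta_Q$, so $\Lie_R H=\partial H/\partial s$ and $\Lie_R\phi_\mu=\partial\phi_\mu/\partial s$. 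Using $\FL^*(\d H)=\d E_L$, $\FL^*(\d\phi_\mu)=\d(\phi_\mu\circ\FL)=0$, and $\FL^*(\eta_Q)=\eta_L$, the expression collapses to $\d E_L-\big(\FL^*(\partial H/\partial s)+\sum_{\mu}\lambda^{\mu}\FL^*(\partial\phi_\mu/\partial s)\big)\eta_L$, so the whole proof reduces to the scalar identity $\FL^*(\partial H/\partial s)+\sum_{\mu}\lambda^{\mu}\FL^*(\partial\phi_\mu/\partial s)=-\partial L/\partial s$.

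This scalar identity is where the argument genuinely uses the contact structure (the $\Lie_R$-terms that are absent in the symplectic case), and I expect it to be the main obstacle. To establish it I would differentiate the defining relations $E_L=H\circ\FL$ and $\phi_\mu\circ\FL=0$ with respect to $s$ and apply the chain rule through $\FL\colon(q,v,s)\mapsto(q,\partial L/\partial v,s)$, obtaining $\FL^*(\partial\phi_\mu/\partial s)=-\gamma_{\phi_\mu}^j\,\partial^2 L/\partial s\,\partial v^j$ and $\FL^*(\partial H/\partial s)=\partial E_L/\partial s-\gamma_H^j\,\partial^2 L/\partial s\,\partial v^j$, where $\partial E_L/\partial s=v^j\,\partial^2 L/\partial s\,\partial v^j-\partial L/\partial s$. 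Substituting $v^j=\gamma_H^j+\sum_{\mu}\lambda^{\mu}\gamma_{\phi_\mu}^j$ from Proposition~\ref{prop_resolution_identity} makes the $\gamma^j\,\partial^2 L/\partial s\,\partial v^j$ contributions cancel in pairs, leaving exactly $-\partial L/\partial s$. Uniqueness of $K$ then yields $V=K$, and the $\lambda^{\mu}$ are those of Proposition~\ref{prop_resolution_identity} by construction. As a cross-check, and as an alternative route that bypasses the intrinsic bookkeeping, one can expand both sides directly in natural coordinates using the local forms of $K$, $X_H$ and $X_{\phi_\mu}$ and match the $\partial/\partial q^i$, $\partial/\partial p_i$ and $\partial/\partial s$ components; the same chain-rule relations and resolution of the identity force agreement in all three.
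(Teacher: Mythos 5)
Your proposal is correct, but it runs in the opposite logical direction from the paper's proof. The paper \emph{derives} the decomposition: starting from the characterization~\eqref{K_formula_def} of $K$ via the isomorphism $B$, it observes that $B\circ K-\d H\circ\FL+(H\circ\FL)(\eta_Q\circ\FL)-\tparder{L}{s}(\eta_Q\circ\FL)$ lies in $\Ker{}^t\Tan(\FL)$, expands it in the frame $\{\d\phi_\mu\circ\FL\}$, applies $B^{-1}$ to get the intermediate formula~\eqref{K_formula_intermidiate}, identifies the coefficients $\alpha^\mu$ with the $\lambda^\mu$ of Proposition~\ref{prop_resolution_identity} by a coordinate check against the second-order condition, and only then obtains the scalar identity~\eqref{L_partial_s_formula} as a consequence of $K\cdot s=L$ combined with~\eqref{resolution_liouville}. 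You instead \emph{verify} that the candidate $V=X_H\circ\FL+\sum_\mu\lambda^\mu(X_{\phi_\mu}\circ\FL)$ satisfies the three defining conditions of $K$ and invoke the uniqueness of $K$ established in Section~\ref{sec:contact_K_operator}; remarkably, your verification reduces to exactly the same identity~\eqref{L_partial_s_formula}, which you prove independently and more directly, by differentiating $E_L=H\circ\FL$ and $\phi_\mu\circ\FL=0$ in $s$ and cancelling via the resolution of the identity. What your route buys: it never touches $B$ or the frame property of $\Ker{}^t\Tan(\FL)$, only the standard identities $i_{X_g}\d\eta_Q=\d g-(\Lie_R g)\eta_Q$, $i_{X_g}\eta_Q=-g$, and it replaces the paper's unspecified coordinate check ``$\alpha^\mu=\lambda^\mu$'' by taking the $\lambda^\mu$ as given from the outset. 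What the paper's route buys: the decomposition is constructed rather than guessed, and uniqueness of the coefficients is automatic because $\{\d\phi_\mu\circ\FL\}$ is a frame. That last point is the one small omission in your write-up: the proposition asserts the $\lambda^\mu$ are \emph{uniquely determined}, and your argument as stated only gives existence. The fix is one line: if two families of coefficients both realize $K$, the difference satisfies $\sum_\mu(\lambda^\mu-\tilde\lambda^\mu)(X_{\phi_\mu}\circ\FL)=0$, and comparing $\tparder{}{q^i}$-components gives $\sum_\mu(\lambda^\mu-\tilde\lambda^\mu)\gamma_\mu^i=0$, which forces $\lambda^\mu=\tilde\lambda^\mu$ since the $\gamma_\mu$ are pointwise linearly independent by Proposition~\ref{Ker_TFL}.
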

\begin{proof}
    Using $\FL^* (H) = E_L$, we can write \eqref{K_formula_def} as
\begin{align}
        ^t \T \FL\left(B\circ K -\d H \circ \FL + (H \circ\FL) (\eta_Q\circ \FL)- \parder{L}{s}(\eta_Q\circ \FL)\right) = 0\,.
\end{align}
Hence, we have a 1-form along the Legendre map that belongs to $\Ker ^t\Tan(\FL)$. Recall that $\{\d \phi_{\mu}\circ \FL\}_\mu$ is a frame for $\Ker ^t\Tan(\FL)$, and so
$$B\circ K -\d H \circ \FL + (H \circ\FL) (\eta_Q\circ \FL)- \parder{L}{s}(\eta_Q\circ \FL)=\sum_{\mu} \alpha^{\mu} \d\phi_{\mu}\circ\FL\,,$$
where the $\alpha^{\mu}$ are some uniquely  determined functions. 

Applying $B^{-1}$ to both sides and rearranging terms, we obtain
$$K= B^{-1}(\d H) \circ \FL + \left( \parder{L}{s}  -H \circ\FL\right)R\circ \FL + \sum_{\mu}\alpha^{\mu} B^{-1}(\d\phi_{\mu})\circ\FL\,.$$
Using Equation~\eqref{resolution_identity} together with the condition 
$\Id_{\Tan Q \times \R} = \rho\circ \Tan\pi_0 \circ K$, one can check, in local coordinates, that the functions $\alpha^{\mu}$ coincide with the functions $\lambda^{\mu}$ from Proposition~\ref{prop_resolution_identity}.
Thus, we have
\begin{equation}\label{K_formula_intermidiate}
    K= B^{-1}(\d H) \circ \FL + \left( \parder{L}{s}  -H \circ\FL\right)R\circ \FL + \sum_{\mu}\lambda^{\mu} B^{-1}(\d\phi_{\mu})\circ\FL\,.
\end{equation}
Note that
\begin{equation}
    K\cdot s =L = \FL^*\left(p_i\parder{H}{p_i}+\parder{H}{s}\right) - H\circ \FL +\parder{L}{s} +\sum_{\mu} \lambda^{\mu} \FL^*\left(p_i\parder{\phi_\mu}{p_i}+\parder{\phi_{\mu}}{s}\right)\,,
\end{equation}
where we have used the local expression of the vector field defined by $\displaystyle B^{-1}(\d f)$, for any function $f\in \Cinfty (\Tan^*Q \times \R)$.

Since $E_L = H\circ \FL$ and $\FL^*(p_i)=\tparder{L}{v^i}$, from the last expression we obtain
\begin{equation}
    E_L+L=\Delta(L) = \FL^*\left(\parder{H}{p_i}\right)\fracp{L}{v^i}+\FL^*\left(\parder{H}{s}\right) +\parder{L}{s} +\sum_{\mu} \lambda^{\mu} \left(\FL^*\left(\parder{\phi_\mu}{p_i}\right)\fracp{L}{v^i}+\FL^*\left(\parder{\phi_{\mu}}{s}\right)\right)\,,
\end{equation}
and, combining this with~\eqref{resolution_liouville}, which yields
\begin{equation}
    \Delta(L) = \Gamma_H(L) +\sum_{\mu}\lambda^{\mu} \Gamma_{\mu}(L) = \FL^*\left(\parder{H}{p_i}\right)\parder{L}{v^i}+\sum_{\mu}\lambda^{\mu}\FL^*\left(\parder{\phi_{\mu}}{p_i}\right)\parder{L}{v^i}\,,
\end{equation}
we immediately obtain the expression
\begin{equation}\label{L_partial_s_formula}
    -\parder{L}{s} = \FL^*\left(\parder{H}{s}\right) + \sum_{\mu}\lambda^{\mu}\FL^*\left( \parder{\phi_{\mu}}{s} \right)\,.
\end{equation}

Now, the result follows directly from Equations~\eqref{K_formula_intermidiate} and~\eqref{L_partial_s_formula}, and the fact that for every primary Hamiltonian constraint we have $\phi_{\mu}\circ \FL = 0$, by definition.
\end{proof}

With this result, for every $f \in \Cinfty(\Tan^*Q\times\R)$, we can write $K\cdot f \in \Cinfty(\Tan Q\times \R)$ as
\begin{equation}
    K\cdot f = \FL ^* (X_H\cdot f) + \sum _{\mu = 1} ^m \lambda ^{\mu} \FL^*(X_{\phi_\mu} \cdot  f)\,.
\end{equation}
In particular, from the local expression of $K$ we have
\begin{align} 
      v &= K\cdot q = \FL ^* (X_H\cdot q) + \sum _{\mu = 1} ^m \lambda ^{\mu} \FL^*(X_{\phi_\mu}\cdot  q)\,, %\tag{\theequation}\label{v_condition_K}
  \\
    \parder{L}{q} +\parder{L}{s}\parder{L}{v} &= K\cdot p = \FL ^* (X_H\cdot p) + \sum _{\mu = 1} ^m \lambda ^{\mu} \FL^*(X_{\phi_\mu}\cdot p)\,,
    \\
    L &= K\cdot s= \FL ^* (X_H\cdot s) + \sum _{\mu = 1} ^m \lambda ^{\mu} \FL^*(X_{\phi_\mu}\cdot s)\,.
\end{align}

\begin{proposition}
     Let $f$ be a function on the phase space. The function $K\cdot f$ is $\FL$-projectable if, and only if, $X_{\phi_{\mu}}\cdot f=0$ for all primary Hamiltonian constraints $\phi_{\mu}$. 
\end{proposition}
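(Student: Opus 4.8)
The plan is to characterise $\FL$-projectability of the function $K\cdot f \in \Cinfty(\Tan Q\times\R)$ through its derivatives along the kernel of $\Tan(\FL)$, and then to evaluate those derivatives by means of the decomposition of $K$ provided by Proposition~\ref{K_XH_expression}. Since $L$ is almost-regular, $\FL$ is a surjective submersion onto $P_0$ with connected fibres, and the tangent spaces to these fibres are exactly $\Ker\Tan(\FL)$. Hence a function $h\in\Cinfty(\Tan Q\times\R)$ is (locally) $\FL$-projectable if, and only if, it is constant along the fibres, which by their connectedness amounts to $X(h)=0$ for every $X\in\Ker\Tan(\FL)$. By Proposition~\ref{Ker_TFL} the vector fields $\Gamma_{\mu}$ form a local frame of $\Ker\Tan(\FL)$, so $h$ is $\FL$-projectable if, and only if, $\Gamma_{\nu}(h)=0$ for every $\nu$.

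I would then apply this criterion to $h=K\cdot f$. Starting from the consequence of Proposition~\ref{K_XH_expression},
\[
K\cdot f = \FL^*(X_H\cdot f) + \sum_{\mu}\lambda^{\mu}\,\FL^*(X_{\phi_\mu}\cdot f)\,,
\]
I act with $\Gamma_{\nu}$. As $\Gamma_{\nu}\in\Ker\Tan(\FL)$, it annihilates every pullback $\FL^*(g)=g\circ\FL$; thus the terms coming from $X_H\cdot f$ and from $X_{\phi_\mu}\cdot f$ drop out, and only the derivatives of the coefficients $\lambda^{\mu}$ survive:
\[
\Gamma_{\nu}(K\cdot f) = \sum_{\mu}(\Gamma_{\nu}\cdot\lambda^{\mu})\,\FL^*(X_{\phi_\mu}\cdot f)\,.
\]
Invoking the relation $\Gamma_{\nu}\cdot\lambda^{\mu}=\delta_{\nu}^{\mu}$ from Equation~\eqref{Gamma_vf_condition}, this collapses to $\Gamma_{\nu}(K\cdot f)=\FL^*(X_{\phi_\nu}\cdot f)$.

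Combining the two steps, $K\cdot f$ is $\FL$-projectable precisely when $\FL^*(X_{\phi_\nu}\cdot f)=0$ for every $\nu$; and since $\FL$ maps onto $P_0$, this pullback vanishes exactly when $X_{\phi_\nu}\cdot f=0$ on $P_0$, that is, for all primary Hamiltonian constraints. The step I expect to need the most care is the first one: the equivalence between $\FL$-projectability and annihilation by the frame $\{\Gamma_{\nu}\}$ relies on the connectedness of the fibres built into the almost-regularity hypothesis (otherwise one obtains only fibrewise-local constancy), together with the identification of the fibre tangent spaces with $\Ker\Tan(\FL)$ furnished by Proposition~\ref{Ker_TFL}. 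The remaining manipulations—the cancellation of the pullback terms and the substitution of Equation~\eqref{Gamma_vf_condition}—are routine.
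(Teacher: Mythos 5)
Your proof is correct and follows essentially the same route as the paper's: differentiate the decomposition $K\cdot f = \FL^*(X_H\cdot f) + \sum_{\mu}\lambda^{\mu}\FL^*(X_{\phi_\mu}\cdot f)$ along the frame $\{\Gamma_{\nu}\}$ of $\Ker\Tan(\FL)$ and invoke $\Gamma_{\nu}\cdot\lambda^{\mu}=\delta_{\nu}^{\mu}$ to get $\Gamma_{\nu}(K\cdot f)=\FL^*(X_{\phi_\nu}\cdot f)$. You are in fact slightly more careful than the paper, which leaves implicit both the projectability criterion (connected fibres plus Proposition~\ref{Ker_TFL}) and the fact that vanishing of the pullback means vanishing of $X_{\phi_\nu}\cdot f$ on $P_0$.
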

\begin{proof}
    We have
    $$\Gamma_{\mu}\cdot(K\cdot f) = \sum_{\nu=1}^m \FL^*(X_{ \phi_{\nu}}\cdot f)(\Gamma_{\mu} \cdot {\lambda}^{\nu}) \,,$$
    and, applying \eqref{Gamma_vf_condition}, we get
    $$\Gamma_{\mu}\cdot(K\cdot f) =\sum_{\mu=1}^m \FL^*(X_{\phi_{\mu}}\cdot f)\,.$$
    Thus, $\Gamma_\mu\cdot(K\cdot f)=0$ for every $\mu=1,\dots,m$ if, and only if, $(X_{\phi_{\mu}}\cdot f)=0$, as we wanted to see.
\end{proof}

%%%%%%%%%%%%%%%%%%%%%%%%%%%%%%%%%%%%%%%%%%%%%%%%%%%%%%%%%%%%%%%%
%%%%%%%%%%%%%%%%%%%%%%%%%%%%%%%%%%%%%%%%%%%%%%%%%%%%%%%%%%%%%%%%
\section{Examples}\label{sec:examples}
%%%%%%%%%%%%%%%%%%%%%%%%%%%%%%%%%%%%%%%%%%%%%%%%%%%%%%%%%%%%%%%%
%%%%%%%%%%%%%%%%%%%%%%%%%%%%%%%%%%%%%%%%%%%%%%%%%%%%%%%%%%%%%%%%

This last section is devoted to the study of a couple interesting examples: the simple pendulum and Cawley's Lagrangian, both with an extra dissipation term.

We use the theory of linearly singular systems (see \cite{GP_92} for details), and the alternative contact Hamiltonian equations without Reeb vector fields~\eqref{cH_eq_1} to develop the constraint algorithm.

%%%%%%%%%%%%%%%%%%%%%%%%%%%%%%%%%%%%%%%%%%%%%%%%%%%%%%%%%%%%%%%%
\subsection{The simple pendulum with damping}
%%%%%%%%%%%%%%%%%%%%%%%%%%%%%%%%%%%%%%%%%%%%%%%%%%%%%%%%%%%%%%%%

Consider a damped pendulum of length $\ell$ and mass $m$. Its position in the plane is given using polar coordinates $(r,\theta)$, where $\theta = 0$ is the rest position. The Lagrangian phase space has coordinates $(r,\theta,\mu; v_r,v_\theta,v_\mu;s)$ and the Lagrangian function of the system $L\colon\T Q\times\R\to\R$ is given by
$$ L = \frac{1}{2}m(v_r^2 + r^2v_\theta^2) - mg(\ell - r\cos\theta) + \mu(r - \ell) - \gamma s\,, $$
where $\gamma\in\R$ is a damping coefficient. 
The Legendre map associated with this Lagrangian function is
$$ \F L(r,\theta,\mu; v_r,v_\theta,v_\mu;s) = \left( r,\theta,\mu;\, p_r = mv_r,\, p_\theta = mr^2v_\theta,\, p_\mu = 0, s \right)\,, $$
giving the constraint
$$ \phi_0 = p_\mu\,. $$
The Lagrangian energy is
\begin{equation}
    E_L=\frac{1}{2}m(v_r^2 + r^2v_\theta^2) + mg(\ell - r\cos\theta) - \mu(r - \ell) + \gamma s \, ;
\end{equation}
so we can take as a possible Hamiltonian function
\begin{equation}
    H= \frac{1}{2m}\left(p_r^2 + \frac{p_\theta^2}{r^2}\right) + mg(\ell - r\cos\theta) - \mu(r - \ell) + \gamma s
\end{equation}
We can compute the corresponding Hamiltonian vector fields, with respect to the 1-form defined in $P_0\coloneq\FL(\Tan Q \times \R)$ by taking $\eta_0 \coloneqq j^*(\eta_Q)$, where $j\colon P_0\hookrightarrow \Tan^*Q \times \R$. In coordinates, we have
$$
\eta_0 = \d s - p_r \d r -p_{\theta}\d \theta\,,
$$
and, using Equations~\eqref{cH_eq_1}, we obtain
\begin{multline}
    X_H = \frac{p_r}{m}\parder{}{r}+\frac{p_{\theta}}{mr^2}\parder{}{\theta}+f_1\parder{}{\mu} +\left(\frac{p_{\theta}^2}{mr^3}+mg\cos\theta + \mu -p_r\gamma\right)\parder{}{p_r}  \\+(-mgr\sin\theta -p_{\theta}\gamma)\parder{}{p_\theta}+f_2\parder{}{p_{\mu}}+ \left(\frac{p_r^2}{m}+ \frac{p_\theta^2}{mr^2}-H \right)\parder{}{s}\,,
\end{multline}
where $f_1,f_2$ are any arbitrary functions. We also obtain, as a necessary condition, the constraint
$$ \phi_1 \coloneqq r-\ell\,. $$
Now, when we perform the constraint algorithm, the tangency condition $0 = X_H \cdot \phi_0 = f_2$ determines the coefficient $f_2$. We also obtain the new constraint 
$$ \phi_2\coloneqq X_H \cdot \phi_1 = \frac{p_r}{m}\,. $$
If we keep performing the algorithm, this last constraint yields
$$\phi_3\coloneqq X_H\cdot\phi_2 = \frac{1}{m}\left(\frac{p_{\theta}^2}{mr^3} + mg\cos\theta + \mu -p_r\gamma\right)\,,$$
and, finally, the tangency condition $X_H\cdot \phi_3 = 0$ determines the coefficient $f_1$. The constraint algorithm ends here.

The evolution operator associated with this Lagrangian function has local expression
\begin{multline}
    K = v_r\parder{}{r}\Big|_{\FL} + v_\theta\parder{}{\theta}\Big|_{\FL} + v_\mu\parder{}{\mu}\Big|_{\FL} + \left( mrv_\theta^2 + mg\cos\theta + \mu - \gamma mv_r \right)\parder{}{p_r}\Big|_{\FL} \\+ \left( -mgr\sin\theta - \gamma mr^2v_\theta \right)\parder{}{p_\theta}\Big|_{\FL} + (r-\ell)\parder{}{p_\mu} + L\parder{}{s}\Big|_{\FL} \,.
\end{multline}
We can apply the evolution operator to the Hamiltonian constraints to obtain Lagrangian ones. Namely, we obtain the following constraints,
\begin{equation}
    \begin{dcases}
        \chi_1 \coloneqq K\cdot \phi_0 = r-\ell \,,
        \\
        \chi_2 \coloneqq K\cdot \phi_1 =  v_r\,,
        \\
        \chi_3 \coloneqq K\cdot \phi_2 = rv_\theta^2 + g\cos\theta + \mu/m -\gamma v_r\,,
        \\
        \chi_4\coloneqq K\cdot \phi_3 = -2v_{\theta}(g\sin\theta + \gamma rv_{\theta})-3v_rv_{\theta}^2 +\frac{v_\mu}{m} -\frac{\gamma}{m}(mrv_{\theta}^2+mg\cos\theta+\mu -\gamma m v_r)\,.
    \end{dcases}
\end{equation}
If one performs the constraint algorithm directly for the Lagrangian formalism, one can check that these same constraints are obtained.

We can also find the functions $\lambda$, from~\eqref{resolution_liouville}. We have
\begin{equation}
    \Gamma_ H = v_r \parder{}{v_r} + v_{\theta} \parder{}{v_{\theta}}\,, \qquad \Gamma_0 = \parder{}{v_\mu},
\end{equation}
and so,
$$\Delta = \Gamma_H + v_{\mu}\Gamma_0\,,$$
which implies that $\lambda^0 = v_{\mu}$ in this case.

If we find $X_H, X_{\phi_0}$, the contact Hamiltonian vector fields, now with respect to the canonical contact structure $\eta_Q \in \Omega^1(\Tan ^* Q\times \R)$, we obtain
\begin{multline}
    X_H = \frac{p_r}{m}\parder{}{r}+\frac{p_{\theta}}{mr^2}\parder{}{\theta}+\left(\frac{p_{\theta}^2}{mr^3}+mg\cos\theta + \mu -p_r\gamma\right)\parder{}{p_r}  \\+(-mgr\sin\theta -p_{\theta}\gamma)\parder{}{p_\theta}+(r-\ell +p_{\mu}\gamma)\parder{}{p_{\mu}}+ \left(\frac{p_r^2}{m}+ \frac{p_\theta^2}{mr^2}-H \right)\parder{}{s}\,,
\end{multline}
and 
\begin{equation}
     X_{\phi_0} = \parder{}{\mu}\,.
\end{equation}
One can easily check that 
$$K = X_H  \circ \FL + v_{\mu} X_{\phi_0}\circ \FL\,.$$

%%%%%%%%%%%%%%%%%%%%%%%%%%%%%%%%%%%%%%%%%%%%%%%%%%%%%%%%%%%%%%%%
\subsection{Cawley's Lagrangian with dissipation}
%%%%%%%%%%%%%%%%%%%%%%%%%%%%%%%%%%%%%%%%%%%%%%%%%%%%%%%%%%%%%%%%

Cawley's Lagrangian is an academic model introduced by R.\,Cawley to study some features of singular Lagrangians in Dirac's theory of constraint systems \cite{Caw_79}. 
In this example we introduce a velocity-dependent dissipation term. Consider the manifold $\Tan\R^3\times\R$ with canonical coordinates $(x,y,z;v_x,v_y,v_z;s)$ and the Lagrangian function
$$ 
L = v_xv_z + \frac{1}{2}yz^2 - \gamma s v_y
\,,
$$
where $\gamma$ is a non-zero damping coefficient. 
The Legendre map is
$$ 
\FL\colon(x,y,z;v_x,v_y,v_z;s) \longmapsto 
(x,y,z;\,p_x=v_z,\,p_y = -\gamma s,\,p_z = v_x;\,s)  \,.
$$
The primary Hamiltonian constraint is 
$\phi_0 = p_y+ \gamma s$,
and $L$ is an almost-regular Lagrangian.

The Lagrangian energy is
$$ E_L = v_xv_z - \frac{1}{2}yz^2\,, $$
therefore, we can take as a Hamiltonian function
$$
H = p_x p_z - \frac{1}{2}yz^2\,.
$$

Now, we can compute the corresponding Hamiltonian vector fields with respect to the 1-form defined on $P_0\coloneq\FL(\Tan Q \times \R)$ by taking $\eta_0 \coloneqq j^*(\eta_Q)$, where $j\colon P_0\hookrightarrow \Tan^*Q \times \R$. In coordinates, we have
$$
\eta_0 = \d s - p_x \d x + \gamma s\d y - p_z\d z \,.
$$
One can check that this 1-form \textit{does not define any Reeb vector field}, and so the usual contact Hamiltonian equations cannot be used to find the corresponding Hamiltonian vector fields in this case.
However, if we apply the alternative Hamiltonian equations~\eqref{cH_eq_1}, we obtain
$$
X_H =  
p_z \parder{}{x} +b\parder{}{y}+ p_x \parder{}{z} -\gamma b p_x \parder{}{p_x}
+ c \parder{}{p_y} + (yz- \gamma b p_z) \parder{}{p_z} 
+ \left(p_xp_z + \frac12 yz^2 -\gamma b s \right) \parder{}{s}
\,,
$$
where $b$ and $c$ are arbitrary functions. We also obtain as a necessary condition the constraint 
$$ \phi_1 \coloneqq \frac12 z ^2 + \gamma\left(p_x p_z +\frac{1}{2}yz^2\right)\,. $$
Now, demanding the tangency to the first constraint submanifold
$$X_H\cdot \phi_0 = c +\gamma\left(p_xp_z +\frac12 yz^2-\gamma bs \right)  = 0$$
we obtain an expression for $c$, in terms of the function $b$.
Again, the same can be done for the constraint $\phi_1$, and the condition $X_H \cdot \phi_1=0$ will determine the function $b$. The constraint algorithm ends here.

The evolution operator is given by
\begin{align}
    K &= 
v_x\parder{}{x}\Big|_{\FL}  + v_y\parder{}{y}\Big|_{\FL}  + v_z\parder{}{z}\Big|_{\FL} 
-\gamma v_yv_z\parder{}{p_x}\Big|_{\FL} \\
&\quad+ \left( \frac{1}{2}z^2 + \gamma^2 s v_y \right) \parder{}{p_y}\Big|_{\FL} 
+ (yz -\gamma v_xv_y)\parder{}{p_z} \Big|_{\FL} 
+ L\parder{}{s}\Big|_{\FL} 
\,. 
\end{align}
We can apply the evolution operator to the Hamiltonian constraints to obtain Lagrangian ones. Namely, we obtain the following constraints,
\begin{equation}
    \begin{dcases}
        \chi_1 \coloneqq K\cdot \phi_0 = \frac{1}{2}z^2 + \gamma \left(v_x v_z +\frac{1}{2}yz^2 \right)\,,
        \\
        \chi_2 \coloneqq K\cdot \phi_1 = zv_z(1+2 \gamma y) + \gamma v_y \left( \frac{1}{2}z^2   -2\gamma v_x v_z \right)\,.
    \end{dcases}
\end{equation}

%%%%%%%%%%%%%%%%%%%%%%%%%%%%%%%%%%%%%%%%%%%%%%%%%%%%%%%%%%%%%%%%
%%%%%%%%%%%%%%%%%%%%%%%%%%%%%%%%%%%%%%%%%%%%%%%%%%%%%%%%%%%%%%%%
\section{Conclusions and outlook}
\label{sec:conclusions}
%%%%%%%%%%%%%%%%%%%%%%%%%%%%%%%%%%%%%%%%%%%%%%%%%%%%%%%%%%%%%%%%
%%%%%%%%%%%%%%%%%%%%%%%%%%%%%%%%%%%%%%%%%%%%%%%%%%%%%%%%%%%%%%%%

In this paper, we have introduced and studied the evolution operator $K$, for singular contact Lagrangian systems. 
To this end, we first reviewed the characterization of the evolution operator~$K$ in ordinary mechanics. We also reviewed the basic facts of contact geometry and contact Hamiltonian systems, for which we obtained alternative formulations of the dynamical equations that do not rely on the Reeb vector field.

We presented a definition of the evolution operator~$K$ within the framework of contact mechanics. This operator exhibits properties analogous to those in ordinary symplectic mechanics, and provides a fundamental link between the Lagrangian and Hamiltonian descriptions in the singular contact setting. To illustrate the theoretical developments, we examined two explicit examples: the simple pendulum and the Cawley Lagrangian, both with extra damping terms.

There are several aspects of singular contact dynamics that deserve further investigation. 
For singular Lagrangian systems the operator~$K$ provides a tight connection between the Lagrangian and Hamiltonian formalisms. 
We plan to explore in detail these connections for contact Lagrangian systems in the future.
This also includes the development of constraint algorithms for precontact Lagrangian and Hamiltonian systems. In order to provide a good account of precontact systems, we also aim to study precontact structures in detail.

\addcontentsline{toc}{section}{Acknowledgements}
%%%%%%%%%%%%%%%%%%%%%%%%%%%%%%%%%%%%%%%%%%%%%%%%%%%%%%%%%%%%%%%%
\section*{Acknowledgements}
%%%%%%%%%%%%%%%%%%%%%%%%%%%%%%%%%%%%%%%%%%%%%%%%%%%%%%%%%%%%%%%%

We acknowledge the financial support from the Spanish Ministry of Science and Innovation, grants  ID2021-125515NB-C21 (MCIN/AEI/10.13039/501100011033/FEDER,UE), and RED2022-134301-T of AEI, 
and the Ministry of Research and Universities of
the Catalan Government, project 2021 SGR 00603 \textsl{Geometry of Manifolds and Applications, GEOMVAP}.

AMM acknowledges financial support from a predoctoral contract funded by Universitat Rovira i Virgili under grant 2025PMF-PIPF-14.

%%%%%%%%%%%%%%%%%%%%%%%%%%%%%%%%%%%%%%%%%%%%%%%%%%%%%%%%%%%%%%%%
%%%%%%%%%%%%%%%%%%%%%%%%%%%%%%%%%%%%%%%%%%%%%%%%%%%%%%%%%%%%%%%%
\bibliographystyle{abbrv}
{\small
\bibliography{references.bib}
}

%%%%%%%%%%%%%%%%%%%%%%%%%%%%%%%%%%%%%%%%%%%%%%%%%%%%%%%%%%%%%%%%
%%%%%%%%%%%%%%%%%%%%%%%%%%%%%%%%%%%%%%%%%%%%%%%%%%%%%%%%%%%%%%%%
\end{document}